\documentclass{article}

\usepackage[T1]{fontenc}
\usepackage{amssymb, amsmath, graphicx, subfigure, xparse}
\usepackage{physics, authblk}
\usepackage{tikz, tkz-graph}
\usepackage{fullpage}
\usepackage{thm-restate}
\usepackage{comment}
\usepackage{multirow}
\usepackage{qcircuit}

\newtheorem{theorem}{Theorem}
\newtheorem{definition}[theorem]{Definition}

\newtheorem{lemma}[theorem]{Lemma}

\newtheorem{fact}[theorem]{Fact}

\newtheorem{conjecture}[theorem]{Conjecture}

\newenvironment{customthm}[1]
  {\innercustomthm}
  {\endinnercustomthm}

\newenvironment{customconj}[1]
  {\innercustomconj}
  {\endinnercustomconj}

\newenvironment{proof}{\noindent{\bf Proof:} \hspace*{1mm}}{
    \hspace*{\fill} $\Box$ }

\usepackage{amsmath, amssymb, graphicx, subfigure}
\usepackage{enumitem}
\usepackage{xparse}
\usepackage{physics}

\newcommand{\x}{\mathbf{x}}

\newcommand{\FF}{\mathbb{F}}

\newcommand{\Aa}{\mathcal{A}}

\newcommand{\Dd}{\mathcal{D}}

\newcommand{\Hh}{\mathcal{H}}

\newcommand{\Jj}{\mathcal{J}}
\newcommand{\Oo}{\mathcal{O}}

\newcommand{\Uu}{\mathcal{U}}

\newcommand{\perm}{\mathsf{perm}}

\newcommand{\eps}{\epsilon}

\newcommand{\defeq}{\mathrel{\overset{\makebox[0pt]{\mbox{\normalfont\tiny\sffamily def}}}{=}}}

\usepackage{complexity}
\newcommand{\SHARPP}{\mathsf{\#P}}
\newcommand{\Poly}{\mathsf{P}}

\newcommand{\bits}{\{0,1\}}

\newtheorem{cor}[theorem]{Corollary}

\newcommand{\fn}[3]{#1: #2 \rightarrow #3}
\newcommand{\bitsfn}[3]{\fn{#1}{\bits^{#2}}{\bits^{#3}}}

\usepackage{pgfplots}

\newcommand{\ce}{\mathsf{CE}}
\newcommand{\ced}{\mathsf{CED}}

\newcommand{\trunctaylor}{truncated perturbed Haar-distribution}
\newcommand{\taylor}{perturbed Haar-distribution}
\newcommand{\haar}{Haar-random circuit distribution}
\newcommand{\Done}{\mathcal{D}}
\newcommand{\Dtwo}{\mathcal{D}'}
\newcommand{\przero}[1]{{\sf{p}}_{\bf{0}}(#1)}

\newcommand{\Random}{\mathcal{H}}
\newcommand{\arXiv}[1]{#1}
\newcommand{\natphys}[1]{}

\begin{document}

\title{Quantum Supremacy and the Complexity of \\ Random Circuit Sampling}
\author[1]{Adam Bouland}
\author[1,2]{Bill Fefferman\thanks{Corresponding Author. e-mail: $\mathtt{wjf@berkeley.edu}$.}}
\author[1]{Chinmay Nirkhe}
\author[1]{Umesh Vazirani}
\affil[1]{\small \textit{Department of Electrical Engineering and Computer Sciences, \protect\\ University of California, Berkeley}}
\affil[2]{\small \textit{Joint Center for Quantum Information and Computer Science (QuICS), \protect\\ University of Maryland / NIST}}
\date{}

\maketitle

\begin{abstract}
\natphys{
A critical milestone on the path to useful quantum computers is quantum supremacy -- a demonstration of a quantum computation that is prohibitively hard for classical computers. A leading near-term candidate, put forth by the Google/UCSB team, is sampling from the probability distributions of randomly chosen quantum circuits, which we call Random Circuit Sampling (RCS).

We establish the first complexity-theoretic evidence of classical hardness of RCS, placing it on par with the best theoretical proposals for supremacy. Specifically, we show that RCS satisfies an average-case hardness condition -- computing most output probabilities is as hard as computing all of them. In addition, it follows from known results that RCS also satisfies an anti-concentration property, making it the first proposal with both.

Ideally, one would like to tie this complexity-theoretic evidence %
to the actual statistical tests used to verify experimental devices. 
We give several examples which clarify necessary properties for such statistical tests, and in particular, show that the leading proposal for such a test -- cross-entropy -- does not make this connection directly.
}
\arXiv{
A critical milestone on the path to useful quantum computers is quantum supremacy -- a demonstration of a quantum computation that is prohibitively hard for classical computers. A leading near-term candidate, put forth by the Google/UCSB team, is sampling from the probability distributions of randomly chosen quantum circuits, which we call Random Circuit Sampling (RCS).

In this paper we study both the hardness and verification of RCS.  While RCS was defined with experimental realization in mind, we show complexity theoretic evidence of hardness that is on par with the strongest theoretical proposals for supremacy.
Specifically, we show that RCS satisfies an average-case hardness condition -- computing output probabilities of typical quantum circuits is as hard as computing them in the worst-case, and therefore $\SHARPP$-hard.  Our reduction exploits the polynomial structure in the output amplitudes of random quantum circuits, enabled by the Feynman path integral. In addition, it follows from known results that RCS satisfies an anti-concentration property, making it the first supremacy proposal with both average-case hardness and anti-concentration.
}

 \end{abstract}

\section{Introduction}

In the early 1990's, complexity-theoretic techniques provided the first theoretical demonstrations that quantum computers have the potential to solve certain computational problems exponentially faster than classical computers \cite{bvSTOC,simonFOCS}.  These paved the way for remarkable results showing that fully fault-tolerant, scalable quantum computers will be able to quickly factor large integers \cite{shor1999polynomial}, as well as simulate quantum mechanical systems \cite{feynman1982simulating,lloyd1996universal}. While quantum devices capable of solving such important problems may still be far off, decades of work undertaken toward building scalable quantum computers have already yielded considerable progress in high-precision control over quantum systems. Indeed, at present, several concurrent experimental efforts from groups in industry and academia such as Google, IBM, and the University of Maryland have already reached the point where systems of around 50 high-quality qubits are within experimental reach \cite{naturenewscommercialize,ibm6qubits,zhang2017observation}.
 
As we approach this so-called Noisy Intermediate Scale era of Quantum computing (or ``NISQ'' \cite{1801.00862}), a key milestone will be \emph{quantum supremacy}: the quest to perform a computational task that can be solved by these systems but cannot be solved in a reasonable amount of time by any classical means. Akin to the early demonstrations of the power of quantum computers, there is no requirement that the computational task be useful -- the main additional requirement is that the task should be physically realizable in the near term. 

Broadly speaking, we can classify quantum supremacy proposals into two categories -- those seeking to provide very strong complexity-theoretic evidence of classical intractability while hoping to be physically realized in the near term, versus those with excellent prospects for physical realization in the short term while providing weaker evidence of classical intractability. This paper shows that these categories intersect by providing strong complexity-theoretic evidence of classical intractability for the leading candidate from the latter category. 

More specifically, the first category of quantum supremacy proposals had their origins in the desire to obtain strong complexity-theoretic evidence of the power of quantum computers. A  key insight was that focusing on the probability distributions quantum devices can sample from, rather than more standard notions of computing or optimizing functions, opens up the possibility of strong evidence of classical intractability. This perspective led to proposals such as BosonSampling \cite{aaronsonboson} and IQP \cite{BJS2010}, together with proofs that the probabilities of particular quantum outcomes correspond to quantities which are difficult to compute  -- such as matrix permanents. This allowed them to connect the hardness of classical simulation of such systems to well-supported hardness assumptions stemming from complexity theory. 

As an added bonus, Aaronson and Arkhipov realized that BosonSampling might be experimentally feasible in the near term, and helped jump-start the quest for quantum supremacy more than half a decade ago \cite{spring2012boson,broome2013photonic,tillmann2013experimental,crespi2013integrated}. More recently,  BosonSampling experiments have faced experimental difficulties with photon generation and detector efficiency, making it challenging to push these experiments to the scale required to achieve supremacy ($\sim50$ photons) \cite{neville2017no,clifford2018classical}. %
It remains to be seen if such experiments can be implemented in the near future.
  
The second category of supremacy results is directly inspired by the dramatic experimental progress in building high-quality superconducting qubits (e.g., \cite{boixo2016characterizing,naturenewscommercialize}).  These groups defined the natural sampling task for their experimental context, which we call \emph{Random Circuit Sampling} (RCS). The task is to take a random (efficient) quantum circuit of a specific form and generate samples from its output distribution. While RCS lacks some of the complexity-theoretic evidence that made BosonSampling so theoretically compelling, this proposal promises to be more readily scaled to larger system sizes in the near term. In particular, the group at Google/UCSB plans to conduct such an experiment on a 2D array of 49 qubits by the end of 2018 \cite{martinistalk}. 
 
Our main result gives strong complexity-theoretic support to this experimentally driven proposal.  In technical terms, this involves developing the first \emph{worst-to-average-case} reduction for computing the output probabilities of random quantum circuits. That is, we prove that the ability to compute the output probability of a typical quantum circuit is as hard as computing the probability of a worst-case circuit\footnote{To be more precise our reduction will work with respect to a natural discretized analog of the Haar measure.}. This is a necessary requirement to show hardness for random circuit sampling along the same lines as BosonSampling. Taken in combination with recent results establishing a subsequent piece of evidence known as \emph{anti-concentration} for these systems \cite{brandao2013exponential,hangleiter2017anti}, this puts RCS on par with the strongest theoretical proposals for supremacy.

Ideally, one would like to tie this complexity-theoretic evidence directly to the actual statistical tests used to verify experimental devices. %
One might hope that the leading candidate measure for verifying RCS, cross-entropy, would certify closeness in total variation distance\footnote{Specifically, cross-entropy is closely related to KL divergence, which is a known upper bound on total variation distance.}, the metric needed for the arguments above.
Unfortunately, there are simple counterexample distributions that score well on cross-entropy yet are far from ideal in total variation distance. In Section \ref{sec:our-results2}, we highlight these and other examples that help clarify some of the challenges in tying such statistical tests to complexity theoretic evidence. We note that this remains an open question for any supremacy proposal, including BosonSampling. 
\iffalse
One might hope that the leading candidate measure for verifying RCS, cross-entropy, would certify closeness in totally variation distance. 
This intuition comes from the similarities between cross-entropy and KL divergence, which is a known upper bound on total variation difference\footnote{This upper bound is known as Pinsker's inequality in the literature. Additionally, KL divergence can be defined as the difference between cross-entropy and entropy.}. %
Despite this, we give a simple counterexample distribution that scores well on cross-entropy yet is far from ideal in total-variation.  The main property we exploit in this counterexample is a reduction in entropy, as compared to the ideal distribution.  In conclusion, tying statistical tests to complexity theory remains a challenge (even for BosonSampling), and in Section \ref{sec:our-results2} we highlight these challenges and clarify the relation between leading verification proposals.
\fi

%

%
 
%
%

\section{Our results: average-case hardness}
\label{sec:ourresults1}

Proposals for quantum supremacy have a common framework. The computational task is to sample from the output distribution $D$ of some experimentally feasible quantum process or algorithm (on some given input). To establish quantum supremacy we must show
\begin{enumerate}
    \item \emph{Hardness}:  no efficient classical algorithm can sample from any distribution close to $D$, and
    \item \emph{Verification}: an algorithm can check that the experimental device sampled from an output distribution close to $D$. 
\end{enumerate}
This need for verifiability effectively imposes a robustness condition on the difficulty of sampling from $D$. For example, the ability to sample one particular output $x$ of a quantum circuit with the correct probability $D(x)$ is known to be hard for classical computers, under standard complexity assumptions, e.g. \cite{terhaldivincenzo, BJS2010, morimae2014hardness,farhi2016quantum,boulandCCC2016}. But this is not a convincing proof of supremacy -- for one, under any reasonable noise model, this single output probability $D(x)$ might not be preserved.  Moreover, this single output $x$ is exponentially unlikely to occur -- and would therefore be extremely difficult to verify. Accordingly, any convincing proof of quantum supremacy must establish that $D$ is actually uniformly difficult to sample from. That is, the computational difficulty must be embedded across the entire distribution, rather than concentrated in a single output. 

The starting point for the BosonSampling proposal of Aaronson and Arkhipov consisted of three observations:  (1)  In general, for sufficiently hard problems (think $\SHARPP$-hard), showing a distribution $D$ is uniformly difficult to sample from corresponds to showing that for most outputs $x$, it is hard to compute $D(x)$. In complexity theory, this is referred to as ``average-case'' rather than ``worst-case'' hardness. (2) The output probabilities of systems of noninteracting bosons can be expressed as permanents of certain $n \times n$ matrices. (3) By a celebrated result of Lipton \cite{lipton_1991}, computing permanents of random matrices is $\SHARPP$-hard, or truly intractable in the complexity theory pantheon.  Together, these gave convincing evidence of the hardness of sampling typical outputs of a suitable system of noninteracting bosons, which could be experimentally feasible in the near term.

Specifically they proved that no classical computer can sample from any distribution within inverse-exponential total variation distance of the ideal BosonSampling output distribution.  Formally extending these results to experimentally relevant noise models, such as constant total variation distance, seems to require approximation robust worst-to-average-case reductions that are beyond the reach of current methods.  Nevertheless, their results, combined with the average-case hardness of the permanent, provide compelling evidence that BosonSampling has such robust hardness.

Permanents have a special structure enabling their average-case hardness -- an ingredient which is thus far missing in other supremacy proposals. Technically, average-case hardness is established by creating a ``worst-to-average-case reduction''. We will show such a reduction for RCS. At a high level, this involves showing that the value on a worst-case instance $x$ can be efficiently inferred from the values at a few random instances $r_1 , \ldots , r_m$. For this to be possible at all, while the $r_k$ might be individually random, their correlations must depend upon $x$ (which we shall denote by $r_0$). Typically such reductions rely on a deep global structure of the problem, which makes it possible to write the value at $r_k$ as a polynomial in $k$ of degree at most $m$. For example, the average-case property of permanents is enabled by its algebraic structure -- the permanent of an $n \times n$ matrix can be expressed as a low degree polynomial in its entries. This allows the value at $r_0 = x$ to be computed from the values at $r_k$ by polynomial interpolation. 

An astute reader may have noticed that randomizing the instance of permanent corresponds to starting with a random linear-optical network for the BosonSampling experiment, but still focusing on a fixed output. Our goal however was to show for a fixed experiment that it is hard to calculate the probability of a random output. These are equivalent by a technique called ``hiding''. By the same token, it suffices for RCS to show that it is hard to compute the probability of a fixed output, $0$, for a random circuit $C$.

To show this average-case hardness for quantum circuits, we start with the observation that the probability with which a quantum circuit outputs a fixed outcome $x$ can be expressed as a low degree multivariate polynomial in the parameters describing the gates of the circuit, thanks to writing the acceptance probability as a Feynman path integral. Unfortunately, this polynomial representation of the output probability does not immediately yield a worst-to-average-case reduction. At its core, the difficulty is that we are not looking for structure in an individual instance -- such as the polynomial description of the output probability for a particular circuit above. Rather, we would like to say that several instances of the problem are connected in some way, specifically by showing that the outputs of several different related circuits are described by a low degree (univariate) polynomial. With permanents, this connection is established using the linear structure of matrices, but quantum circuits do not have a linear structure, i.e. if $A$ and $B$ are unitary matrices, then $A+B$ is not necessarily unitary. This limitation means one cannot directly adapt the proof of average-case hardness for the permanent to make use of the Feynman path integral polynomial.

Here is a more sophisticated attempt to connect up the outputs of different circuits with a polynomial: Suppose we take a worst-case circuit $G = G_m \ldots G_1$, and multiply each gate $G_j$ by a Haar-random matrix $H_j$. By the invariance of the Haar measure, this is another random circuit -- it is now totally scrambled. Now we invoke a unique feature of quantum computation, which is that it is possible to implement a fraction of a quantum gate.  This allows us to replace each gate $H_j$ with $H_j e^{-i\theta h_j}$, where $h_j = - i \log H_j$ and $\theta$ is a small angle, resulting in a new circuit $G(\theta)$.  If $\theta=1$ this gives us back the worst-case circuit $G(1) = G$, but if $\theta$ is very tiny the resulting circuit looks almost uniformly random. One might now hope to interpolate the value of $G(1)$ from the values of $G(\theta_k)$ for many small values of $\theta_k$, thus effecting a worst-to-average reduction. Unfortunately, this doesn't work either. The problem is that $e^{-i\theta h_j}$ is not a low degree polynomial in $\theta$, and therefore neither is $G(\theta)$, so we have nothing to interpolate with.

The third attempt, which works, is to consider using a truncated Taylor series of $e^{-i\theta h_j}$ in place of $e^{-i\theta h_j}$ in the above construction. The values of the probabilities  in this truncation will be very close to those of the proposal above -- and yet by construction we have ensured our output probabilities are low degree polynomials in theta. Therefore, if we could compute most output probabilities of these "truncated Taylor" relaxations of random circuits, then we could compute a worst-case probability. 

\begin{theorem}[Simplified]
It is $\SHARPP$-hard to compute $\abs{\mel{0}{C'}{0}}^2$ with probability $8/9$ over the choice of $C'$, where $C'$ is drawn from any one of a family of discretizations of the Haar measure.
\label{averagecasethm}
\end{theorem}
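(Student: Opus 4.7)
The plan is to prove the theorem via a worst-to-average-case reduction. Assume an oracle $\Oo$ that outputs $|\langle 0|C'|0\rangle|^2$ correctly on at least an $8/9$ fraction of circuits $C'$ drawn from the target discretization $\Dscrone$ of the Haar measure. Using $\Oo$, I will compute $|\langle 0|G|0\rangle|^2$ for an arbitrary worst-case circuit $G = G_m \cdots G_1$, a quantity that is already known to be $\SHARPP$-hard. Following the sketch in the introduction, I sample Haar-random unitaries $U_1, \ldots, U_m$, set $h_j = i\log(U_j^{-1} G_j)$ so that $e^{-i h_j} = U_j^{-1} G_j$, and for each parameter $\theta$ define the circuit $\tilde{C}(\theta) = \tilde{C}_m(\theta) \cdots \tilde{C}_1(\theta)$ with $\tilde{C}_j(\theta) = U_j \cdot T_K(-i\theta h_j)$, where $T_K(x) = \sum_{k=0}^K x^k/k!$ is the order-$K$ Taylor truncation of the matrix exponential. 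By construction $\tilde{C}(1) \approx G$ while $\tilde{C}(0)$ consists of the Haar-random gates $U_1,\ldots,U_m$.

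Next I would verify the polynomial structure and analyze the two endpoints. Writing $\tilde{p}(\theta) = |\langle 0|\tilde{C}(\theta)|0\rangle|^2$ and expanding via the Feynman path integral, $\langle 0|\tilde{C}(\theta)|0\rangle$ is a polynomial in $\theta$ of degree at most $mK$, so $\tilde{p}$ is a polynomial of degree $d = 2mK$. At $\theta = 1$, the truncation error $\|T_K(-i h_j) - e^{-i h_j}\|$ is exponentially small in $K$, hence $|\tilde{p}(1) - |\langle 0|G|0\rangle|^2|$ can be made inverse-exponentially small by taking $K$ polynomial in $m$. At small $\theta$, the distribution of $\tilde{C}(\theta)$ (over the random $U_j$) is a small perturbation of the Haar measure, and with appropriate tuning lies within some small TV distance $\eta$ of $\Dscrone$. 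Now pick $L = \Theta(d)$ distinct interpolation points $\theta_1, \ldots, \theta_L$ in a tiny interval $[-\delta, \delta]$, query $\Oo$ on each $\tilde{C}(\theta_i)$, and attempt to reconstruct $\tilde{p}$. Each query returns the correct value with probability at least $8/9 - \eta$, so by a Chernoff bound at most a $(1/9 + \eta + o(1))$ fraction of the $L$ responses are corrupted with high probability. Since $\tilde{p}$ is a univariate polynomial of known degree $d$, Berlekamp--Welch rational-function reconstruction exactly recovers $\tilde{p}$ from these noisy evaluations whenever $L \geq d + 2e + 1$, where $e$ is the number of corruptions --- a condition satisfied by taking $L$ a sufficiently large constant multiple of $d$ for small $\eta$. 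Evaluating the recovered polynomial at $\theta = 1$ yields $\tilde{p}(1)$, from which I read off $|\langle 0|G|0\rangle|^2$ up to the inverse-exponential truncation error, completing the reduction.

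The main obstacle is to simultaneously control three quantities: (i) the Taylor truncation error $|\tilde{p}(1) - |\langle 0|G|0\rangle|^2|$, which demands $K$ large; (ii) the TV distance $\eta$ between the distribution of $\tilde{C}(\theta_i)$ and the target $\Dscrone$, which demands $\delta$ and the Taylor parameters to be compatible with how the discretization was defined; and (iii) the degree $d = 2mK$ and query count $L$, which must be large enough for Berlekamp--Welch to succeed yet small enough that the corrupted fraction $(1/9 + \eta)$ stays strictly below $1/2$. The most delicate step is the hiding lemma: for appropriately chosen $\delta$ and $K$, the induced distribution on $\tilde{C}(\theta_i)$ must match the discretized Haar distribution to within constant TV distance. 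This is nonobvious because the truncated-Taylor gates $U_j T_K(-i\theta h_j)$ are not exactly unitary, so the discretization $\Dscrone$ must be defined so as to accommodate such perturbed circuits naturally --- which is presumably why the theorem statement allows $\Dscrone$ to be any of a \emph{family} of discretizations rather than a single canonical one.
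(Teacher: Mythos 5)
Your reduction is substantively the same as the paper's: take a worst-case circuit $G$, scramble each gate by a Haar-random matrix, and trace a truncated-Taylor path in a parameter $\theta$ from a near-random circuit back to $G$, so that $q(\theta) = \abs{\langle 0|\tilde C(\theta)|0\rangle}^2$ is a degree-$2mK$ polynomial and Berlekamp--Welch recovers $q(1) \approx \przero{G}$. Your parametrization (defining $h_j = i\log(U_j^{-1}G_j)$ and tracing $U_j(U_j^{-1}G_j)^\theta$) is literally the same path as the paper's (which writes $C_j H_j^{1-\theta}$), just with the Haar-random endpoint at $\theta=0$ instead of the change of variables $H_j = C_j^{-1}U_j$; the resulting family of distributions is identical.

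However, there is a genuine gap in your error-fraction argument. You claim that ``by a Chernoff bound at most a $(1/9+\eta+o(1))$ fraction of the $L$ responses are corrupted with high probability.'' This is false: the $L$ query circuits $\tilde C(\theta_1), \ldots, \tilde C(\theta_L)$ are all built from the \emph{same} random $U_1, \ldots, U_m$, so the events ``$\Oo$ answers query $i$ correctly'' are correlated and the Chernoff bound does not apply. In fact, one can imagine a pathological oracle that is correct on $8/9$ of circuits but, whenever the underlying $\{U_j\}$ falls in a bad $1/9$-measure set, errs on \emph{all} $\theta$'s simultaneously. What is true (and what the paper uses) is a simple averaging argument: since $\mathbb{E}_\theta \Pr_{\{U_j\}}[\text{error}] \leq 1/9$, Fubini and Markov give that for at least a $2/3$ fraction of choices of $\{U_j\}$, the oracle is correct on at least a $2/3$ fraction of $\theta$'s. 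This gives only constant success probability (not ``with high probability''), so one must repeat the reduction $O(1)$ times with fresh Haar randomness and take a majority vote. With a $1/3$ corruption rate rather than $\approx 1/9$, you also need $L$ a somewhat larger multiple of $d$ so that $(L-d-1)/2 \geq L/3$; this is still $L = \Theta(d)$.

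A secondary but worth-noting point: your proof introduces a TV-slack $\eta$ by requiring that all the query distributions at different $\theta_i$ be TV-close to a single fixed $\Dscrone$, and you flag this hiding step as ``delicate.'' The paper sidesteps this entirely by declaring the family of discretizations to be \emph{exactly} the set of truncated-Taylor distributions $\{C \cdot \Hh_{\theta,K}\}$ over the allowed range of $\theta$, so every query distribution is a member of the family and $\eta = 0$ by fiat. The nontrivial TV-closeness claim (that these distributions are close to the genuine Haar distribution) is then proved separately, and is needed only to argue that the theorem is a necessary step toward the true supremacy conjecture, not to make the worst-to-average reduction itself go through. Adopting the paper's framing would let you delete the $\eta$-slack and the associated hiding lemma from the reduction.

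Finally, a minor point: your interpolation points should be taken in $[0, \delta)$ rather than $[-\delta, \delta]$, since $\theta$ parametrizes the fraction of the ``rotation back'' and is defined only for $\theta \in [0,1]$; taking $\theta < 0$ would not correspond to a member of the intended distribution family.
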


These truncated circuit probabilities are slightly different from the average-case circuit probabilities but are exponentially close to them (even in relative terms). However, they are essentially the same from the perspective of supremacy arguments because quantum supremacy requires that computing most output probabilities even approximately remains $\SHARPP$-hard, and our perturbations to the random circuits fall within this approximation window. Therefore we have established a form of worst-to average-case reduction which is necessary, but not sufficient, for the supremacy condition to remain true. This is directly analogous to the case of permanents, where we know that computing average-case permanents exactly is $\SHARPP$-hard, but we do not know this reduction is sufficiently robust to achieve quantum supremacy.

RCS does satisfy an additional robustness property known as ``anti-concentration''. Anti-concentration states that the output distribution of a random quantum circuit is ``spread out''  --  that most output probabilities are reasonably large. Therefore, any approximation errors in estimating these probabilities are small relative to the size of the probability being computed. Once one has established a worst-to-average-case reduction, anti-concentration implies that there is some hope for making this reduction robust to noise -- intuitively it says that the signal is large compared to the noise.

Of the numerous quantum supremacy proposals to date which are robust to noise
\cite{aaronsonboson,feffermanfourier,bremner2016average,boixo2016characterizing,aaronson2016complexity,bremner2016achieving,morimae2017hardness,hangleiter2017anti,boulandcccs,bremner2017upcoming}, 
only two have known worst-to-average-case reductions: BosonSampling and FourierSampling \cite{aaronsonboson,feffermanfourier}. However, it remains open if these proposals also anti-concentrate.  On the other hand, many supremacy proposals have known anti-concentration theorems (see e.g., \cite{bremner2016average,boixo2016characterizing,bremner2016achieving,morimae2017hardness,hangleiter2017anti,boulandcccs,bremner2017upcoming}), but lack worst-to-average-case reductions. We note, however, that anti-concentration is arguably less important than worst-to-average case reductions, as the latter is necessary for quantum supremacy arguments, while the former is not expected to be necessary.  In the case of RCS, anti-concentration follows from prior work \cite{brandao2013exponential,hangleiter2017anti}. Therefore, our work is the first to show that both can be achieved simultaneously.  

\newcommand*\circled[1]{\tikz[baseline=(char.base)]{
            \node[shape=circle,draw,inner sep=1pt] (char) {#1};}}

\begin{figure}[h!]
\scriptsize
\centering
\begin{tabular}{l | c | c | c | c}
\multirow{2}{4em}{\textit{Proposal}} & \textit{Worst-case} & \textit{Average-case} & \multirow{2}{9em}{\textit{Anti-concentration}} & \textit{Imminent} \\
& \textit{hardness} & \textit{hardness} & & \textit{experiment} \\ \hline
BosonSampling \cite{aaronsonboson} & \checkmark & \checkmark & & \\
FourierSampling \cite{feffermanfourier} & \checkmark & \checkmark & & \\
IQP \cite{BJS2010,bremner2016average,bremner2016achieving} & \checkmark & & \checkmark &  \\
{\bf RCS \cite{brandao2013exponential,boixo2016characterizing,boixo2017fourier,neill2017blueprint,hangleiter2017anti}} & {\checkmark} & {\circled{\checkmark}} & {\checkmark} & {\checkmark}
\end{tabular}
\caption{The leading quantum supremacy proposals.}
\end{figure} 

\section{Our results: statistical verification of Random Circuit Sampling}
\label{sec:our-results2}

We now turn to verifying that an experimental realization of Random Circuit Sampling has performed RCS faithfully.  Verification turns out to be quite challenging. The first difficulty is that computing individual output probabilities of an ideal quantum circuit requires exponential classical time. However, current proposals leverage the fact that $n=50$ is small enough that it is feasible to perform this task on a classical supercomputer. The second difficulty is that one can only take a small number of samples from the experimental quantum device. This means there is no hope of experimentally observing all $2^{50}$ outcomes, nor of estimating their probabilities empirically\footnote{Nor of performing complete tomography, at this would both require a large number of measurements, and moreover would require one to trust the very quantum operations one is seeking to verify.}. The challenge is therefore to develop a statistical measure which respects these limitations, and nevertheless verifies quantum supremacy. 
 
The leading statistical measure proposed for verification is the ``cross-entropy'' \cite{boixo2016characterizing,boixo2017fourier,neill2017blueprint}, which is defined as
\[ \sum_x p^U_{dev}(x) \log \left( \frac{1}{p^U_{id} (x)} \right) \]
where $p^U_{dev}(x)$ is the probability the experimental device outputs $x$, and $p^U_{id}(x)$ is the probability the ideal device outputs $x$. This measure is specifically designed so that one can estimate it by taking a few samples $x_1,x_2,\ldots,x_k$ from the device, and computing the average value of $\log(1/p^U_{id}(x_i))$ using a classical supercomputer.

Ideally, we would like to connect the cross-entropy measure to the rigorous complexity-theoretic arguments in favor of quantum supremacy developed in Section \ref{sec:ourresults1}.  Invoking these hardness results as currently formulated requires the output distribution of the experimental quantum device to be close in total variation distance to the ideal.

Unfortunately, without strong assumptions as to how the quantum device operates, cross-entropy does not certify closeness in total variation distance -- in fact we give a counterexample distribution that achieves a nearly perfect cross-entropy score and yet is arbitrarily far from ideal in total variation distance. 

Another attempt at obtaining quantum supremacy from RCS is to make use of certain verifiable properties of the resulting ideal outcome distributions. Most notably, the Porter-Thomas ``shape'' of the RCS outcome distribution -- i.e., how many output strings $x$ have their output probability $p(x)$ in a certain range -- has been suggested as a ``signature'' of quantum effects \cite{boixo2016characterizing}.  We give an example of a naturally arising classical process that resembles the physics of a noisy/decoherent quantum system and yet has an outcome distribution that approximates Porter-Thomas. %
 
Consequently, any supremacy proposal based on outcome statistics cannot be based solely on shape. It must directly incorporate the relationship between specific outcome strings and their probabilities. Cross-entropy does take this into account because it requires computing the ideal output probabilities of the observed samples. It has been suggested that it may be intrinsically difficult to achieve high cross-entropy \cite{boixo2017fourier}, but this is thus far not supported by any complexity-theoretic evidence. Another recent proposal of Aaronson and Chen called Heavy Output Generation (or HOG) identifies a particularly simple relationship between output strings and their probabilities as a  possible avenue to supremacy \cite{aaronson2016complexity}. Viewed 
from the correct perspective, cross-entropy and HOG are more similar than they appear at first sight. While HOG can be tied to a hardness conjecture called QAUTH, a major challenge is to connect this with a standard, well-believed conjecture such as the non-collapse of the $\PH$.  Directly linking verification to computational complexity remains open for all supremacy proposals to date, including BosonSampling.

\section*{Acknowledgements}
We thank Scott Aaronson, Dorit Aharonov, Matthew Coudron, Abhinav Deshpande, Tom Gur, Zeph Landau, Nicholas Spooner, and Henry Yuen for helpful discussions. Authors AB, BF, CN, and UV were supported by ARO Grant W911NF-12-1-0541 and NSF Grant CCF-1410022 and a Vannevar Bush faculty fellowship.  Portions of this paper are a contribution of NIST, an agency of the US government, and is not subject to US copyright.
\bibliography{references}
\bibliographystyle{alpha}

\appendix

\section{Worst-to-average-case reduction}
\label{sec:worsttoavg}

Our main result is to give the first worst-to-average-case reduction for computing the output probabilities of random quantum circuits.
We will now describe why this result is critical to establishing quantum supremacy from Random Circuit Sampling (RCS).

Let us first define what we mean by RCS. Random Circuit Sampling is the process of picking a random (efficient) quantum circuit and then sampling from its output distribution.  Formally, an \emph{architecture} $\Aa$ is a collection of graphs, one for each integer $n$. Each graph consists of $m\leq\poly{(n)}$ vertices where each vertex $v$ has $\mathrm{deg}_{\mathrm{in}}(v) = \mathrm{deg}_{\mathrm{out}}(v) \in \{1,2\}$.  A circuit $C$ acting on $n$ qubits over $\Aa$ is instantiated by taking the $n$-th graph and specifying a gate for each vertex in the graph that acts on the qubits labelled by the edges adjacent to that vertex.  
That is, we can think of an architecture as an outline of a quantum circuit (one for each size n), and one needs to fill in the blanks (specify each gate) to instantiate a circuit.

We will consider the distribution on circuits where each gate is drawn uniformly at random. Here ``uniformly at random'' means according to the Haar measure, i.e. the unique measure on unitary matrices that is invariant under (left or right) multiplication by any unitary.
\begin{definition}[Haar random circuit distribution]
Let $\Aa$ be an architecture over circuits and let the gates in the architecture be $\{G_i\}_{i = 1, \ldots, m}$. Define the distribution $\Hh_\Aa$ (or $\Hh$ when $\Aa$ is clear from context) over circuits in $\Aa$ by drawing each gate $G_i$ independently from the Haar measure.
\end{definition}

Random Circuit Sampling is then defined as follows:

\begin{definition}[Random Circuit Sampling]
Random Circuit Sampling over a fixed architecture $\Aa$ is the following task: given a description of a random circuit $C$ from $\Hh_\Aa$, and a description of an error parameter $\epsilon>0$, sample from the probability distribution induced by $C$ (i.e., draw $y \in \bits^n$ with probability $\Pr(y) = \abs{\mel{y}{C}{0^n}}^2$) up to total variation distance $\epsilon$ in time $\poly(n,1/\epsilon)$. 
\end{definition}

While RCS is defined relative to an architecture $\Aa$, the exact choice of $\Aa$ will not matter for our main result, so we will often suppress this dependence in the next several sections. We will discuss the architectures proposed for quantum supremacy in detail in Appendix \ref{sec:anticonc}.
Also, note that this definition is designed to allow for a small amount of error in the classical sampler. This is to capture the fact that real-world quantum devices will be unable to perform this task exactly due to noise - and hence this definition allows the classical device the same error tolerance we allow the quantum device.
As usual \emph{total variation distance} means one half of the $\ell_1$ distance between the probability distributions.

The goal of our work is to argue that RCS is difficult for classical computers.
The crux of this argument lies in the relative difference in the difficulty of estimating the output probabilities of classical vs quantum circuits.
It is well known that certain output probabilities of quantum circuits are very difficult to compute -- in fact, they can be $\#\mathsf{P}$-hard to approximate, which is truly intractable.
In contrast, it is much easier to approximate the output probabilities of classical circuits \cite{stockmeyer85}, under an assumption known as the non-collapse of the polynomial hierarchy.
This result alone is enough to establish the difficulty of \emph{exactly} sampling from the probability distribution output by the quantum device (i.e. in the case $\epsilon=0$) \cite{BJS2010,aaronsonboson}.

However, to make this argument robust to experimental noise, we need the hardness of computing output probabilities to be ``more spread out'' in the output distribution, rather than concentrated in a single output which could be corrupted by noise. 
This was precisely the insight of Aaronson and Arkhipov \cite{aaronsonboson}. They showed that BosonSampling cannot be classically simulated under the following conjecture:

\begin{conjecture}[\cite{aaronsonboson}, Informal] \label{gpe-conjecture}
Approximating most output probabilities of most linear optical networks  is $\SHARPP$-hard.

\end{conjecture}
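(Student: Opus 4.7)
The plan is to reduce worst-case exact permanent computation---which Valiant showed to be $\SHARPP$-hard---to approximating $|\perm(X)|^2$ on a typical complex Gaussian matrix $X$, and then to lift this to linear optical networks via the standard Aaronson--Arkhipov dictionary: output amplitudes of an $m$-mode $n$-photon linear optical network are, up to a combinatorial normalization, permanents of $n \times n$ submatrices of the network's $m \times m$ unitary; and in the regime $m = \omega(n^2)$, a random $n \times n$ submatrix of a Haar-random unitary is close in total variation distance to an i.i.d.\ complex Gaussian matrix (the Jiang-type ``hiding'' step). Once this dictionary is in place, the conjecture reduces to showing average-case $\SHARPP$-hardness of approximating $|\perm(X)|^2$ for Gaussian $X$.

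The core of the reduction I would attempt is a robust Lipton-style worst-to-average-case argument adapted to the permanent. For a worst-case matrix $A$ and an independent matrix $B$ of i.i.d.\ complex Gaussian entries, the function $p(t) = \perm(A + tB)$ is a univariate polynomial of degree $n$ in $t$. For any fixed $t$ bounded away from $0$, the matrix $A + tB$ is distributed (after appropriate rescaling) close to the target Gaussian ensemble, so each evaluation point looks like an average-case instance under which the approximation oracle is guaranteed to succeed. Exact values of $p(t_k)$ at $n+1$ distinct points $t_k$ would recover $p(0) = \perm(A)$ by Lagrange interpolation; this is essentially how Aaronson and Arkhipov already establish exact average-case $\SHARPP$-hardness.

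The hard part---and the full content of the conjecture---is making the interpolation robust to the approximation error promised by the oracle. Standard Lagrange interpolation amplifies errors by factors exponential in the degree $n$, which destroys any constant-factor multiplicative approximation. My plan would be to instead apply a Berlekamp--Welch-style noise-tolerant polynomial reconstruction, generalized from finite-field Reed--Solomon decoding to real/complex evaluations, so that a constant fraction of wildly incorrect oracle answers along the line can be corrected. This must be combined with a hybrid argument along $\{A + tB : t \in [0,1]\}$ ensuring that (i) at each $t_k$ the shifted distribution is statistically close enough to the target Gaussian ensemble that the oracle's ``most inputs'' guarantee fires, and (ii) the propagated error at $t=0$ remains within the permitted approximation window relative to $|\perm(A)|^2$---which is where anti-concentration of the Gaussian permanent enters as a normalization.

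The main obstacle is step (i)--(ii) together with robust reconstruction: no currently known polynomial interpolation procedure simultaneously tolerates a constant fraction of relative error at every evaluation point and uses only polynomially many evaluations over $\CC$, and no simple hybrid manages to keep the intermediate distributions statistically close to Gaussian throughout $[0,1]$ while also keeping the interpolation points well-separated enough for stable reconstruction. Assuming such an interpolation scheme existed, one would finish by invoking Stockmeyer's $\BPP^{\NP}$ approximate-counting theorem to convert an oracle for additive-approximate output probabilities of linear optical networks into an oracle for multiplicative-approximate $|\perm|^2$, completing the worst-to-average-case chain. It is precisely this robust interpolation step that has kept the GPE conjecture open to date.
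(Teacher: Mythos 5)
This statement is a \emph{conjecture} (Aaronson--Arkhipov's approximate Gaussian permanent estimation conjecture), and the paper does not prove it --- indeed it explicitly notes that Aaronson and Arkhipov did not prove it either, and only established the weaker Theorem~\ref{gpe-theorem}, in which the average-case oracle is required to compute output probabilities \emph{exactly}. Your proposal, to its credit, is honest about this: you correctly lay out the standard attack (hiding via $n\times n$ submatrices of an $m\times m$ Haar unitary with $m=\omega(n^2)$, the degree-$n$ polynomial $\perm(A+tB)$ along a Gaussian-shifted line, Stockmeyer plus anti-concentration to pass between additive and multiplicative error), and you correctly identify that the entire difficulty is the robustness of the interpolation step. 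But that means what you have written is not a proof; it is a description of why the conjecture is open. The sentence ``Assuming such an interpolation scheme existed, one would finish by\ldots'' is assuming exactly the missing ingredient.

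One concrete point worth sharpening: Berlekamp--Welch (the tool both Aaronson--Arkhipov and this paper's Theorem~\ref{averagecasethm} use) corrects a bounded fraction of evaluation points that are \emph{arbitrarily wrong}, provided the remaining points are \emph{exactly correct}. It gives no protection when \emph{every} evaluation carries a small additive or multiplicative error, which is precisely the promise the approximate-average-case oracle makes. Naive Lagrange interpolation through $k=\poly(n)$ noisy points on a degree-$n$ polynomial amplifies per-point error by factors exponential in $n$ (the Lebesgue constant of the node set), and no known reconstruction procedure over $\RR$ or $\CC$ avoids this while using polynomially many evaluations confined to a short interval near $t=0$ (needed so the shifted distribution stays close to Gaussian). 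So the gap in your argument is not a technical detail to be filled in; it is the open problem itself, and the honest conclusion is that the statement remains a conjecture supported by, but not implied by, the exact-computation reduction.
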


While they did not prove this conjecture, they were able to prove the following necessary worst-to-average-case reduction:
\begin{theorem}[\cite{aaronsonboson}, Informal]\label{gpe-theorem}
Exactly computing most output probabilities of most linear optical networks  is $\SHARPP$-hard.
\end{theorem}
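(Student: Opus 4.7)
The plan is to reduce the worst-case problem of computing the permanent of an arbitrary integer matrix (which is $\SHARPP$-hard by Valiant) to the problem of exactly computing output probabilities on a randomly chosen linear optical network. The reduction has two pieces chained together: a \emph{hiding} step which identifies output probabilities of a Haar-random network with permanents of an i.i.d.\ complex Gaussian matrix, and a \emph{Lipton-style polynomial interpolation} step which reduces worst-case permanents to average-case Gaussian permanents.

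First I would invoke the standard identity that the probability a linear optical network specified by an $m \times m$ unitary $U$ maps the input photon configuration $1^n 0^{m-n}$ to a collision-free output configuration $S$ equals $|\perm(U_S)|^2$, where $U_S$ is the $n \times n$ submatrix indexed by the rows in $S$ and the first $n$ columns. Next, the hiding lemma of Aaronson--Arkhipov says that, provided $m = \omega(n^5)$ (or some similar polynomial blow-up), the distribution over $n \times n$ submatrices of a Haar-random $U$ is within inverse-polynomial total variation distance of the distribution $\Gg$ of matrices with i.i.d.\ complex Gaussian entries (normalized by $1/\sqrt{m}$). Combined with a hiding/symmetrization argument that ``mixes'' which output $S$ we query (using the fact that permuting rows of $U$ is another Haar-random network with the output relabeled), this means an oracle that returns $|\perm(\cdot)|^2$ correctly on a $(1-o(1))$ fraction of its inputs when those inputs are drawn from $\Gg$.

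Second, I would carry out a Lipton-style reduction. Given a worst-case integer matrix $X$, draw a random Gaussian $Y \sim \Gg$ and form the one-parameter family $Z(t) = tX + Y$. For each fixed $t$ in a small neighborhood of $0$ (chosen so that $Z(t)$ remains approximately Gaussian-distributed), query the oracle for $|\perm(Z(t))|^2$ at many evaluation points $t_1, \ldots, t_K$. Since $\perm(Z(t))$ is a degree-$n$ polynomial in $t$, the squared modulus $|\perm(Z(t))|^2$ is a degree-$2n$ polynomial in $t$ (when $t$ ranges over the reals), so we can attempt polynomial reconstruction. Because a noticeable fraction of the oracle answers may be incorrect, this reconstruction is done via Berlekamp--Welch, which succeeds as long as the fraction of correct evaluations beats the critical threshold $1/2 + 2n/K$. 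Evaluating the resulting polynomial at $t = $ (something large enough to swamp out $Y$) or, more cleanly, recovering $\perm(X)$ up to an overall known factor by taking the leading coefficient, gives the worst-case permanent, contradicting $\SHARPP$-hardness unless the supposed oracle does not exist.

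The main obstacle is stitching the two steps together cleanly. The hiding step only guarantees closeness to Gaussian in total variation, so each shifted evaluation point $Z(t_i)$ is only \emph{approximately} Gaussian, with a TVD bound that degrades linearly in $|t_i|$; one must therefore choose the interpolation grid small enough that every $Z(t_i)$ is within the ``good'' region for the average-case oracle while still having enough points to run Berlekamp--Welch. A secondary subtlety is that output probabilities are $|\perm|^2$ rather than $\perm$, so one must interpolate the degree-$2n$ polynomial $|\perm(Z(t))|^2$ and then extract $|\perm(X)|^2$ from a leading coefficient (sufficient for $\SHARPP$-hardness since $|\perm(X)|^2$ is itself $\SHARPP$-hard to compute over integer matrices). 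Everything else — Valiant's theorem, Lipton's original reduction over finite fields adapted to work over $\CC$ with Berlekamp--Welch, and the concrete hiding computation — follows standard templates.
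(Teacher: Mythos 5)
This theorem appears in the paper only as an informal citation of Aaronson--Arkhipov, with no proof supplied; your reconstruction is a faithful and essentially correct account of their argument, resting on the same two pillars they use — the hiding/Gaussian-submatrix lemma (that $n\times n$ submatrices of an $m\times m$ Haar unitary are close in total variation to i.i.d.\ complex Gaussians once $m$ is a sufficiently large polynomial in $n$) and a Lipton-style worst-to-average reduction via low-degree polynomial interpolation with Berlekamp--Welch to tolerate the faulty fraction of oracle calls, where your leading-coefficient extraction from the degree-$2n$ polynomial $|\perm(Z(t))|^2$ is a clean equivalent of their extrapolation. The one step you gloss over is that handing an arbitrary (near-)Gaussian matrix $Z(t_i)$ to an oracle that nominally takes a linear-optical network and an output string requires embedding $Z(t_i)$, suitably scaled by $1/\sqrt{m}$, as an $n\times n$ block of an $m\times m$ unitary while preserving the Haar distribution on $U$ — a lemma Aaronson and Arkhipov prove, and which should be invoked explicitly to make the ``hiding'' step rigorous.
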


Our Theorem \ref{averagecasethm} establishes the analogue of Theorem \ref{gpe-theorem} for Random Circuit Sampling.  Just as for Aaronson and Arkhipov, this theorem will give necessary evidence in support of our main hardness conjecture:

\begin{conjecture}[Informal] \label{missing-conjecture}
There exists an architecture $\mathcal{A}$ so that approximating $\abs{\bra{y}C\ket{0^n}}^2$ for most outcomes $y\in\{0,1\}^n$ and $C$ drawn from $\Hh_\Aa$ is $\SHARPP$-hard. 
\end{conjecture}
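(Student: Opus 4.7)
The plan is to exhibit a Lipton-style worst-to-average-case reduction adapted to the unitary group. Fix a worst-case circuit $G = G_m \cdots G_1$ whose zero-to-zero probability $\abs{\mel{0}{G}{0}}^2$ we wish to compute, recalling that this quantity is $\SHARPP$-hard in the worst case. Via the Feynman path integral, this probability is a polynomial of bounded degree in the entries of the $G_j$'s. The obstacle to immediately mimicking Lipton's permanent argument is that unitaries are not closed under affine combinations, so one cannot simply "add noise" to the gates and interpolate back.

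To get around this, I would construct a one-parameter family $C(\theta)$ of circuits connecting $G$ to a Haar-random circuit through the tangent space of the unitary group. Draw fresh Haar-random gates $H_1, \ldots, H_m$, let $h_j = -i \log H_j$, and define the $j$-th gate of $C(\theta)$ to be the degree-$K$ truncation of the Taylor series of $H_j G_j e^{-i\theta h_j}$ around $\theta = 0$, for a parameter $K$ to be fixed. By invariance of the Haar measure, at any fixed small $\theta$ the gate $H_j G_j e^{-i\theta h_j}$ is again Haar-distributed, and its degree-$K$ truncation is exponentially close to that gate, so $C(\theta_0)$ is statistically close to the advertised \trunctaylor\ at small $\theta_0$. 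Meanwhile, at $\theta = 1$, the truncation is exponentially close to $H_j G_j e^{-i h_j} = G_j$, so $C(1)$ gives (a tiny perturbation of) the worst-case $G$. Crucially, by construction $p(\theta) := \abs{\mel{0}{C(\theta)}{0}}^2$ is now a univariate polynomial in $\theta$ of degree $O(mK)$.

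The third step is interpolation with error-correction. Pick $L = O(mK)$ values $\theta_1, \ldots, \theta_L$ at small positive real numbers. Each $C(\theta_k)$ is individually distributed close to the \trunctaylor. Given an oracle $\mathcal{O}$ that computes output probabilities correctly on an $8/9$ fraction of circuits drawn from that distribution, a Markov / union bound argument shows that $\mathcal{O}$ returns the correct value $p(\theta_k)$ on a large constant fraction of the $\theta_k$'s simultaneously. Feeding these (mostly correct) evaluations into the Berlekamp--Welch algorithm recovers the polynomial $p$ exactly, and then evaluating at $\theta = 1$ yields $\abs{\mel{0}{C(1)}{0}}^2$, which is exponentially close to the target worst-case probability. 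Composing with standard amplification to boost accuracy (or to handle the exponentially small Taylor-truncation error) completes the $\SHARPP$-hardness conclusion.

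The hardest part will be balancing the three parameters --- the Taylor-truncation degree $K$, the interpolation scale of the $\theta_k$, and the correctness fraction of the oracle --- so that simultaneously (i) the polynomial $p$ has low enough degree for $O(mK)$ queries to suffice, (ii) the truncated perturbed distribution at each $\theta_k$ is close enough in total variation distance to the reference \trunctaylor\ that the oracle's $8/9$ guarantee transfers with only a small loss, and (iii) the interpolation is robust to the tiny but nonzero slack between $C(1)$ and $G$. In particular, one must carefully define the target ``discretization of the Haar measure'' (a family of truncated-Taylor ensembles indexed by $K$ and the perturbation scale) so that it is both efficiently describable and within reach of the above construction; I expect this parameter-balancing to be where most of the technical effort lies.
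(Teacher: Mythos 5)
This statement is a \emph{conjecture} in the paper --- the authors explicitly leave it open, and what they actually prove (Theorem~\ref{averagecasethm}) is only the weaker, exact-computation version of it. Your proposal reproduces, quite faithfully, the paper's proof of Theorem~\ref{averagecasethm}: the Haar-scrambling $C_j \mapsto C_j H_j e^{-i\theta h_j}$, the degree-$K$ Taylor truncation to restore polynomial structure, Berlekamp--Welch interpolation of the degree-$O(mK)$ polynomial $q(\theta) = \przero{C'(\theta)}$ from values at small $\theta_k$, and evaluation at $\theta=1$. But that establishes $\SHARPP$-hardness of computing $\przero{C'}$ \emph{exactly} (or to inverse-exponential precision) on most inputs, not of \emph{approximating} it to the additive error $\eps/2^n$ that Conjecture~\ref{missing-conjecture} requires.

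The gap is precisely the one the paper flags as ``beyond the reach of current methods.'' Your interpolation argument needs the oracle's answers at the queried points $\theta_k$ to be exactly correct on a majority, which is what lets Berlekamp--Welch recover $q$ and hence $q(1)$. If instead the oracle is only guaranteed to return an \emph{approximate} value $q(\theta_k) \pm \eps/2^n$ --- which is all Conjecture~\ref{missing-conjecture}'s premise gives you --- then Berlekamp--Welch no longer applies, and any noisy-polynomial-fitting substitute faces a fatal obstruction: you are extrapolating a degree-$\Theta(mK)$ polynomial from the tiny window $[0, 1/\poly(n)]$ all the way out to $\theta = 1$, and the condition number of that extrapolation is exponentially large in the degree. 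A per-query error of $\eps/2^n$ therefore blows up to something uninformative at $\theta = 1$. The closing line about ``standard amplification to boost accuracy'' does not address this --- there is no known amplification that converts an average-case oracle with per-point additive error $\eps/2^n$ into one with exponentially small error, and producing such a robust worst-to-average reduction is exactly the open problem that separates Theorem~\ref{averagecasethm} from Conjecture~\ref{missing-conjecture} (and, analogously, the permanent result of Lipton/Aaronson--Arkhipov from the corresponding BosonSampling conjecture). In short: you have proposed a correct route to what the paper proves, but not to the statement you were asked about, and no proof of that statement exists in the paper or in the literature.
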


Furthermore, prior work has shown that Random Circuit Sampling has additional property known as anti-concentration \cite{brandao2013exponential,hangleiter2017anti}, which has not been proven for BosonSampling or FourierSampling. Anti-concentration can be seen as evidence that an average-case hardness result could be made robust to noise. 
We will discuss how known anti-concentration results can be integrated into our hardness proof in Appendix \ref{sec:anticonc}.

\subsection{Our average-case reduction}
\label{sec:average-case-reduction}

Our first result gives evidence that approximating average-case output probabilities of random quantum circuits remains difficult.  It is well known that computing output probabilities of worst-case quantum circuits is $\SHARPP$-hard.
Our goal is, therefore, to establish that computing output probabilities of \emph{average-case} random quantum circuits is just as difficult.
We achieve this by giving a \emph{worst-to-average-case reduction} for computing output probabilities of random quantum circuits. 
That is, we show that if one could compute average-case quantum circuit probabilities, then one could infer the value of worst-case quantum circuit probabilities.
Therefore, computing average-case probabilities is also $\SHARPP$-hard.

Establishing average-case hardness is surprisingly subtle. It will be useful to first recall the worst-to-average-case reduction for the permanent of matrices over the finite field $\FF_q$ \cite{lipton_1991}, where $q$ is taken to be a sufficiently large polynomial in the input parameter. In the case of permanents, the structure which connects the values of random permanents is low-degree polynomials. The permanent of an $n \times n$ matrix,
\[
\perm(A) = \sum_{\sigma \in S_n} \prod_{i = 1}^n A_{i,\sigma(i)}
\]
is a polynomial of degree $n$ in the $n^2$ matrix entries. Let $X$ be a random $n \times n$ matrix over $\FF_q$, where $q\geq n+2$.  Moreover, suppose our goal is compute the permanent of a worst-case matrix $Y$.  We first consider the line $A(t) = X t + Y$; note that for $t \neq 0$, $A(t)$ is uniformly distributed over $\FF_q^{n \times n}$. 
If we are able to calculate $\perm(R)$ with probability $\geq 1 - \frac{1}{3(n+1)}$ over $R \sim_{\mathcal{U}} \FF_q^{n \times n}$, then by the union bound, we could compute $A(t)$ correctly at $n+1$ different values of $t$ with bounded error probability. This is possible because the union bound holds despite $A(t)$ being correlated with one another -- it only requires that the \emph{marginal} distribution on each one is uniform. So standard polynomial interpolation techniques on $\{(t_j, \perm(A(t_j))\}_{j = 1, \ldots, n+1}$ allow us to learn the function $\perm(A(t))$ and therefore, $\perm(Y) = \perm(A(0))$. With more rigorous analysis -- but the same intuition -- one can argue that we only need to be calculate $\perm(R)$ with probability $3/4 + 1/\poly(n)$ \cite{welch1986error,Gemmell:1991:SPA:103418.103429}\footnote{Additional improvements have been made to reduce to the probability to $1/\poly(n)$ \cite{548475,Cai:1999:HP:1764891.1764903}.}. 

Therefore, polynomial interpolation allows us to compute permanents of every matrix $\in \FF_q^{n \times n}$ if we can compute the permanent on most matrices. 
A ``random survey'' of permanent values can be used to infer the value of all permanents. 
Combined with the fact that the permanent problem is worst-case $\SHARPP$-hard \cite{VALIANT1979189}, this implies that computing permanents in $\FF_q^{n \times n}$ on average is $\#\mathsf{P}$-hard. Formally, the following result was obtained.
\begin{theorem}[Average-case hardness for permanents \cite{lipton_1991,Gemmell:1991:SPA:103418.103429}]
The following is $\SHARPP$-hard: For sufficiently large $q$, given a uniformly random matrix $M \in \FF_q^{n \times n}$, output $\perm(M)$ with probability $\geq \frac{3}{4} + \frac{1}{\poly(n)}$.
\end{theorem}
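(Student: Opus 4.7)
The plan is to give a worst-to-average-case reduction from the worst-case permanent problem over $\FF_q$, which is $\SHARPP$-hard by Valiant's theorem. Suppose we have an oracle $\mathcal{O}$ that, on input a uniformly random $M\in\FF_q^{n\times n}$, outputs $\perm(M)$ with probability at least $3/4 + 1/\poly(n)$. Given an arbitrary target matrix $Y\in\FF_q^{n\times n}$, I would sample a uniformly random $X\in\FF_q^{n\times n}$ and form the affine line
\[
A(t) = Y + tX, \qquad t\in\FF_q.
\]
Three facts are used: (i) for any fixed $t\neq 0$, the matrix $A(t)$ is uniformly distributed over $\FF_q^{n\times n}$, since $tX$ is; (ii) the univariate map $t\mapsto \perm(A(t))$ is a polynomial of degree at most $n$, because the permanent is a degree-$n$ polynomial in the matrix entries and each entry of $A(t)$ is affine in $t$; (iii) $\perm(Y)=\perm(A(0))$, so recovering this polynomial and evaluating at $0$ yields the worst-case answer.

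Next I would query $\mathcal{O}$ on $A(t_1),\ldots,A(t_k)$ for $k=\poly(n)$ distinct nonzero points $t_1,\ldots,t_k\in\FF_q^*$ (possible since $q$ is a sufficiently large polynomial). By fact (i) each query individually is a uniformly random input, so the expected fraction of correct responses is at least $3/4 + 1/\poly(n)$; applying Markov's inequality to the number of \emph{errors} (and, if necessary, boosting by repetition) shows that with constant probability the actual error rate is below $1/4 - 1/\poly(n)$, strictly bounded away from $1/4$. Importantly, this only needs the marginal distribution at each $t_j$ to be uniform; the joint correlations across queries are irrelevant for the union/Markov bound.

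The main obstacle is that with only $\approx 3/4$ per-query success, naive Lagrange interpolation fails: we cannot identify $n+1$ evaluations guaranteed to be correct. The key step is to view $(t_j,\mathcal{O}(A(t_j)))_{j=1}^k$ as a noisy Reed--Solomon codeword of a degree-$n$ polynomial and to run the Berlekamp--Welch decoder (or the Gemmell--Sudan variant). Berlekamp--Welch uniquely recovers the polynomial as long as the number of corrupted evaluations is below $(k-n)/2$, so choosing $k$ a sufficiently large polynomial in $n$ so that $(k-n)/2 > k/4$ suffices; with our error rate strictly below $1/4$, the decoder succeeds with high probability. Evaluating the recovered polynomial at $t=0$ gives $\perm(Y)$.

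Putting the pieces together: we have reduced computing $\perm(Y)$ on an arbitrary $Y$ to polynomially many oracle calls on uniformly random matrices, with overall success probability $\ge 2/3$; since worst-case permanent over $\FF_q$ is $\SHARPP$-hard (Valiant) and $\SHARPP$ is closed under randomized polynomial-time reductions with bounded error, average-case permanent is also $\SHARPP$-hard. The only delicate point is the interplay between $q$, $n$, $k$, and the required interpolation margin; one must pick $q=\poly(n)$ large enough that $\FF_q^*$ contains enough distinct evaluation points and that the Berlekamp--Welch decoding radius strictly exceeds the expected error count, but this is a routine parameter calculation given the $1/\poly(n)$ advantage in the success probability.
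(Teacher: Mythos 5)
Your overall plan --- a random affine line $A(t) = Y + tX$ through the worst-case instance, uniformity of the marginals at each nonzero $t$, and Berlekamp--Welch decoding of the resulting noisy Reed--Solomon codeword --- is exactly the standard route, and it matches the sketch the paper gives before citing \cite{welch1986error,Gemmell:1991:SPA:103418.103429}. However, there is a genuine quantitative gap in your Markov step. You want to conclude that ``with constant probability the actual error rate is below $1/4 - 1/\poly(n)$.'' That does not follow. The expected number of errors among $k$ queries is at most $(1/4 - \gamma)k$ with $\gamma = 1/\poly(n)$, and Markov applied at the threshold $k/4$ gives
\[
\Pr\!\left[E \ge \tfrac{k}{4}\right] \;\le\; \frac{(1/4-\gamma)k}{k/4} \;=\; 1 - 4\gamma,
\]
so the probability of falling \emph{below} rate $1/4$ is only about $4\gamma = 1/\poly(n)$, not a constant. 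Your hedge of ``boosting by repetition'' does not obviously rescue this: you have no way to tell which repetitions succeeded, and the wrong outputs across failed trials need not collide, so a plurality vote among exponentially many candidate field elements is not guaranteed to surface the correct value when each trial succeeds only with $1/\poly(n)$ probability.

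The fix is to apply Markov at the Berlekamp--Welch radius itself rather than at $k/4$. For $k$ points and a degree-$n$ polynomial, unique decoding tolerates up to $(k-n)/2$ errors, which for $k \gg n$ is close to $k/2$, not $k/4$. Markov then gives
\[
\Pr\!\left[E \ge \tfrac{k-n}{2}\right] \;\le\; \frac{(1/4-\gamma)k}{(k-n)/2} \;=\; \frac{1/2 - 2\gamma}{1 - n/k},
\]
which is a constant strictly less than $1/2$ once $k$ is a large enough polynomial. So a single run already succeeds with probability $> 1/2$, and now a majority vote over $\poly(n)$ independent runs (each with fresh $X$) boosts this to $1 - 2^{-\poly(n)}$ by Chernoff. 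Alternatively, the classical GLRSW route replaces your line with a degree-$2$ curve $A(t) = Y + tX_1 + t^2X_2$, whose evaluations at distinct nonzero $t$ are \emph{pairwise independent}; a Chebyshev (rather than Markov) bound then gives sharper concentration at the cost of doubling the polynomial degree to $2n$. Either repair closes the gap; with only the Markov-at-$1/4$ bound as written, the reduction does not have constant success probability per trial.
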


To establish worst-to-average-case reductions for random circuits, we need to find a similar structural relation between our worst-case circuit, whose output probability we wish to compute, and average-case circuits in which each gate is chosen randomly.  A first observation is that there is indeed a low-degree polynomial structure -- stemming from the Feynman path-integral -- which allows us to write the probability of any outcome as a low-degree polynomial in the gate entries.  This polynomial is fixed once we fix both the outcome and the architecture of the circuit, and the degree is twice the number of gates in the circuit\footnote{The factor of 2 accounts for the Born rule for output probabilities.}, which is a polynomial in the input parameter.
\begin{fact}[Feynman path integral]
Let $C = C_mC_{m-1}\ldots C_2 C_1$, be a circuit formed by individual gates $C_i$ acting on $n$ qubits. Then 
\[\bra{y_m}C\ket{y_0} = \sum_{y_1,y_2, \ldots, y_{m-1} \in \bits^n} \prod_{j = 1}^m \bra{y_j}C_j\ket{y_{j-1}}.
\]
\label{fact-feynmanpathintegral}
\end{fact}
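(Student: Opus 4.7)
The plan is to prove this identity by repeatedly inserting the computational-basis resolution of the identity on $n$ qubits, namely $I = \sum_{y \in \{0,1\}^n} \ket{y}\bra{y}$, between consecutive gates in the product $C_m C_{m-1} \cdots C_1$. Since $C$ acts on $n$ qubits and the computational basis $\{\ket{y} : y \in \{0,1\}^n\}$ is an orthonormal basis of $(\mathbb{C}^2)^{\otimes n}$, this resolution is exact.

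Concretely, I would start from the left-hand side $\bra{y_m} C_m C_{m-1} \cdots C_1 \ket{y_0}$ and insert a copy of $I = \sum_{y_j} \ket{y_j}\bra{y_j}$ between $C_{j+1}$ and $C_j$ for each $j = 1, \ldots, m-1$. Because each inserted sum is finite and these are linear operators on a finite-dimensional space, the sums can be pulled out to the front without any convergence issues, yielding
\[
\bra{y_m} C \ket{y_0} \;=\; \sum_{y_1, \ldots, y_{m-1} \in \{0,1\}^n} \bra{y_m} C_m \ket{y_{m-1}} \bra{y_{m-1}} C_{m-1} \ket{y_{m-2}} \cdots \bra{y_1} C_1 \ket{y_0}.
\]
Recognizing the product of matrix elements as $\prod_{j=1}^m \bra{y_j} C_j \ket{y_{j-1}}$ gives the claimed formula.

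The argument is essentially bookkeeping and does not involve any subtle step, so I do not anticipate a genuine obstacle. The only thing to be careful about is to match the boundary indices: the outermost matrix elements involve the fixed $\ket{y_0}$ and $\bra{y_m}$, which are not summed over, while the intermediate indices $y_1, \ldots, y_{m-1}$ are. A clean way to present the proof is by induction on $m$: the base case $m=1$ is immediate, and the inductive step follows by writing $C_mC_{m-1}\cdots C_1 = C_m (C_{m-1}\cdots C_1)$, applying the $m=2$ case with a single inserted resolution of the identity between $C_m$ and the remaining product, and then using the inductive hypothesis on the sub-product.
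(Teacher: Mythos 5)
Your proof is correct and is the standard derivation: insert the resolution of the identity $I = \sum_{y_j}\ket{y_j}\bra{y_j}$ between consecutive gates and pull the finite sums to the front. The paper states this as a Fact without proof, treating it as standard, and your argument is exactly the textbook one that would be given.
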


There are two subtleties we need to address.  The first is that the value of this polynomial is the probability of a fixed output $y_m$. Our analysis will therefore focus on the hardness of estimating the $\przero{C}\defeq|\langle 0^n|C|0^n\rangle|^2$ probability for $C$ drawn from $\Hh_\Aa$, rather than the hardness of approximating the probability of a random $y_m$. These can be proven equivalent using the ``hiding'' property of the  $\Hh_\Aa$ distribution: we can take a circuit drawn from this distribution, append Pauli $X$ gates to a uniformly chosen subset of output qubits, and remain distributed via $\Hh_\Aa$.  
We discuss hiding in more detail in Appendix \ref{subsec:samp-implies-count}.

The second subtlety is that this is a polynomial over the complex numbers, instead of $\mathbb{F}_q$. Bridging this gap requires considerable technical work\footnote{We note that Aaronson and Arkhipov have given a worst-to-average-case reduction for computing the permanent with complex Gaussian entries \cite{aaronsonboson}.  However, our reduction will be quite different, due to structural differences between quantum circuit amplitudes and permanents.}.  Indeed, in proving the reduction for permanents of matrices over finite fields, we leveraged the fact that $A(t) = Xt + Y$ will be randomly distributed across $\FF_q^{n \times n}$ since $X$ is uniformly random and $Y$ is fixed.  To leverage a similar property for random circuit sampling, we need, given a worst-case circuit $C$, a polynomial $C(t)$ over circuits such that (1) $C(0) = C$ and (2) $C(t)$ is distributed like a Haar random circuit.
To be more precise, for a fixed architecture $\Aa$, we will  we hope to say that the $\przero{C}$ probability for a circuit $C \in \Aa$ drawn from $\Hh_\Aa$ is hard to compute on average.

A na\"ive approach to doing this is to copy the proof for the permanent. If we could perturb each gate in a random linear direction, then we could use polynomial interpolation to perform the worst-to-average-case reduction as above. That is, consider taking a worst-case circuit $A$ and adding a random circuit $B$ (gate by gate) to obtain $A+tB$. It is true that $\przero{A + tB}$ is a low-degree polynomial in $t$, so one might hope to use interpolation to compute the worst-case value at $t=0$. Unfortunately, this idea does not work because quantum gates do not have a linear structure. In other words, if $A$ and $B$ are unitary matrices, then $A+tB$ is not necessarily unitary -- and hence $A+tB$ are not necessarily valid quantum circuits. So this na\"ive interpolation strategy will not work.

We consider a different way of perturbing circuits.  Suppose that we take a worst-case circuit $C = C_m, \ldots, C_1$, and multiply each gate $C_j$ by an independent Haar random matrix $H_j$. That is, we replace each gate $C_j$ with $C_jH_j$. By the left-invariance of the Haar measure, this is equivalent to selecting each gate uniformly at random. Now suppose we ``rotate back'' by tiny amount back towards $C_j$ by some small angle $\theta$. More specifically, replace each gate $C_j$ of the circuit with $C_j H_j e^{-i h_j \theta}$ where $h_j = - i \log H_j$. If $\theta=1$ this gives us back the worst-case circuit $C$, but if $\theta$ is very tiny this looks almost Haar random. %
One might hope that by collecting the values of many probabilities at small angles $\theta$, one could interpolate back to the worst-case point $C$ of interest. Therefore, a second attempt would be to take a (possibly worst-case) circuit $C$, scramble it by multiplying it gate-wise by a \emph{perturbed Haar distribution} defined below, and then use some form of interpolation in $\theta$ to recover the probability for $C$ at $\theta=1$.

\begin{definition}[$\theta$-\taylor]
Let $\Aa$ be an architecture over circuits, $\theta$ a constant $\in [0,1]$, and let $G_m, \ldots, G_1$ be the gate entries in the architecture. Define the distribution $\Hh_{\Aa, \theta}$ (or $\Hh_\theta$ when $\Aa$ is clear from context) over circuits in $\Aa$ by setting each gate entry $G_j = H_j e^{-i h_j \theta}$ where $H_j$ is an independent Haar random unitary and $h_j = - i \log H_j$.
\end{definition}

Unfortunately, this trick is not sufficient to enable the reduction. The problem is that $e^{-i \theta h_j}$ is not a low-degree polynomial in $\theta$, so we have no structure to apply polynomial interpolation onto. While there is structure, we cannot harness it for interpolation using currently known techniques.
Although this doesn't work, this trick has allowed us to make some progress.
A promising property of this method of scrambling is that it produces circuits which are close to randomly distributed -- which we will later prove rigorously.
This is analogous to the fact that $A+tB$ is randomly distributed in the permanent case, a key property used in that proof. 
We merely need to find some additional polynomial structure here in order to utilize this property.

We find this polynomial structure by considering Taylor approximations of $e^{-i h_j \theta}$ in place of $e^{-i h_j \theta}$ in the above construction. The values of the probabilities in this truncation will be very close to those of the proposal above -- and yet by construction we have ensured our output probabilities are low degree polynomials in $\theta$. Formally, we define a new distribution over circuits with this property:

\begin{definition}[$(\theta, K)$-\trunctaylor]
Let $\Aa$ be an architecture over circuits, $\theta$ a constant $\in [0,1]$, $K$ an integer, and let $G_m, \ldots, G_1$ be the gate entries in the architecture. Define the distribution $\Hh_{\Aa, \theta, K}$ (or $\Hh_{\theta,K}$ when $\Aa$ is clear from context) over circuits in $\Aa$ by setting each gate entry 
\[
G_j = H_j \left( \sum_{k = 0}^K \frac{(-i h_j \theta)^k}{k!} \right) 
\]
where $H_i$ is an independent Haar random unitary and $h_j = - i \log H_j$.
\end{definition}

Now suppose we take our circuit $C$ of interest and multiply each gate in it by $\Hh_{\theta,K}$ to ``scramble'' it. This is precisely how a computer would sample from $C\cdot \Hh_\theta$ as one cannot exactly represent a continuous quantity digitally. Suppose we could compute the probabilities of these circuits for many choices of $\theta$ with high probability. 
Now one can use similar polynomial interpolation ideas to show hardness of this task.

To state this formally, let us define some notation. For a circuit $C$ and $\Done$ a distribution over circuits of the same architecture, let $C \cdot \Done$ be the distribution over circuits generated by sampling a circuit $C' \sim \Done$ and outputting the circuit $C \cdot C'$ where multiplication is defined gate-wise.  
Explicitly, we show the following worst-to-average-case reduction, which we prove in Appendix \ref{subsec:proof-worst-to-avg}:

\begin{customthm}{\ref{averagecasethm}}
Let $\Aa$ be an architecture so that computing  $\przero{C}$ to within additive precision $2^{-\poly{(n)}}$, for any $C$ over $\mathcal A$ is  $\SHARPP$-hard in the worst case.  Then it is $\SHARPP$-hard to compute 8/9 of the probabilities $\przero{C'}$ over the choice of $C'$  from the distributions $\Dtwo_{C} \defeq C \cdot \Hh_{\theta,K}$ where $\theta = 1/\poly(n)$, $K = \poly(n)$.
\end{customthm}

\subsection{Theorem \ref{averagecasethm} is necessary for Conjecture \ref{missing-conjecture}}\label{subsection:necessity}
To reflect on our result, Theorem \ref{averagecasethm} shows that a worst-to-average-case reduction is indeed possible with respect to a distribution over circuits that is close to the Haar distribution we desire.  Of course, a skeptic could claim that such a result is only superficially related to our eventual goal, proving Conjecture \ref{missing-conjecture}.  Our next result is aimed precisely at such a skeptic: we show that the hardness result established in Theorem \ref{averagecasethm} will be necessary to prove Conjecture \ref{missing-conjecture}.

Let us start by choosing some convenient notation.  For the purposes of this section, let us fix an architecture $\mathcal{A}$ as well as parameters $\theta = \frac{1}{\poly(n)}$, and $K = \poly(n)$. 
Then, with respect to a fixed circuit $C$ over this architecture, we denote the distribution $C \cdot \Hh_{\theta}$ as $\Done_C$ (i.e., the corresponding $\theta-$\taylor), and $C \cdot \Hh_{\theta,K}$ will be denoted $\Dtwo_C$ (i.e., the corresponding $(\theta,K)-$\trunctaylor).
We also define the joint distribution of $\Done_C$ and $\Dtwo_C$, which we denote by $\Jj_C$.  This is the distribution over pairs of circuits $(C_1,C_2)$ generated by choosing independent Haar random gates $\{H_j\}_{j=1 \ldots m}$ and using this choice to publish $C_1$ from $\Done_C$ and $C_2$ from $\Dtwo_C$, using the same choice of $\{H_j\}$.  
Then, the marginal of $\Jj_C$ on $C_1$ is $\Done_C$ and on $C_2$ is $\Dtwo_C$ but they are correlated due to the same choice of $\{H_j\}$.
For simplicity of notation, we will often suppress the argument $C$ and simply write $\Done,\Dtwo,\Jj$.

Now we will show how to use the existence of an algorithm for computing probabilities of most circuits with respect to the \trunctaylor\ to estimate probabilities of most circuits drawn from the \haar.
We introduce one more helpful definition for these results, namely:
\begin{definition} We say an algorithm $\mathcal{O}$ $(\delta,\epsilon)$-computes a quantity $p(x)$ with respect to a distribution $F$ over inputs if:
\[\Pr_{x \sim F}\left[p(x)-\epsilon \leq \mathcal{O}(x) \leq p(x) + \epsilon\right]\geq 1-\delta.\]
\end{definition}
In other words, the algorithm computes an estimate to the desired quantity with high-probability over instances drawn from $F$.  In these terms, the main result of this section will be:
\begin{theorem}\label{thm:necessity}
Suppose there exists an efficient algorithm $\mathcal{O}$ that for architecture $\mathcal{A}$, $(\epsilon,\delta)$-computes the $\przero{C'}$ probability with respect to circuits $C'$ drawn from $\Dtwo$, then there exists an efficient algorithm $\mathcal{O'}$ that $(\epsilon',\delta')$-computes the $\przero{C'}$ probability with respect to circuits $C'$ drawn from $\Random$, with $\epsilon'=\epsilon+1/{\sf exp}(n)$ and $\delta'=\delta+1/\poly(n)$.
\end{theorem}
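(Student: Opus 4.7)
The plan is to exhibit a reduction of the form $\mathcal{O}'(C') \defeq \mathcal{O}(T(C'))$ for a deterministic polynomial-time map $T$ that realizes the coupling $\Jj$: when $C' \sim \Done$, the output $T(C')$ is exactly the paired sample from $\Dtwo$. To build $T$ on input $C'$ with gates $H'_j$, I would for each $j$ compute $H_j \defeq (H'_j)^{1/(1-\theta)}$ by diagonalization (valid since each gate acts on $O(1)$ qubits), set $h_j \defeq -i\log H_j$, and define the $j$-th gate of $T(C')$ to be $H_j \cdot \sum_{k=0}^K (-i h_j \theta)^k/k!$. When $C' \sim \Done$, the extracted $H_j$'s are precisely the i.i.d.\ Haar matrices from the definition of $\Jj$, so $T(C') \sim \Dtwo$ as required.

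Pointwise, the $j$-th gate of $T(C')$ differs from the $j$-th gate of $C'$ by $H_j \left( \sum_{k=0}^K (-i h_j \theta)^k/k! - e^{-ih_j\theta} \right)$, whose operator norm is bounded by the Taylor remainder $(\pi\theta)^{K+1}/(K+1)!$. For $\theta = 1/\poly(n)$ and $K = \poly(n)$ sufficiently large this is at most $1/\exp(n)$, so a standard hybrid bound over the $m = \poly(n)$ gates gives $\|T(C')-C'\|_{op} \leq 1/\exp(n)$. Since operator-norm closeness bounds matrix-element differences, $|\przero{T(C')} - \przero{C'}| \leq 1/\exp(n)$ holds for \emph{every} input $C'$, regardless of its distribution.

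The last ingredient is a total variation bound $TV(\Random, \Done) \leq 1/\poly(n)$, which I would prove gate by gate: the single-gate map $H \mapsto H \cdot e^{-ih\theta} = H^{1-\theta}$ is a smooth near-bijection on $U(d)$ whose eigenvalues are simply scaled by $1-\theta$, and a change-of-variables (e.g.\ via Weyl integration) gives $TV \leq O(\theta)$ per gate; independence of gates then yields $TV(\Random,\Done) \leq m\cdot O(\theta) = 1/\poly(n)$ after choosing $\theta$ to be a sufficiently small polynomial in $n$. Since TV is non-increasing under deterministic maps, $TV(T(\Random), \Dtwo) \leq 1/\poly(n)$, so $\mathcal{O}$'s $(\epsilon,\delta)$-guarantee on $\Dtwo$ transfers to an $(\epsilon, \delta + 1/\poly(n))$-guarantee on $T(\Random)$; composing with the pointwise $1/\exp(n)$ bound on probabilities yields the claimed $(\epsilon + 1/\exp(n), \delta + 1/\poly(n))$-guarantee of $\mathcal{O}'$ on $\Random$. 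I expect this per-gate TV estimate to be the main obstacle, since $H \mapsto H^{1-\theta}$ misbehaves near matrices with repeated eigenvalues (where $\log$ is not globally smooth) and near spectra containing $-1$ (where the image leaves a gap on the unit circle); the plan is to absorb these exceptional points into a Haar-measure-$1/\exp(n)$ bad set (harmlessly absorbed into the $1/\poly(n)$ slack in $\delta'$) and perform the Jacobian computation on its complement.
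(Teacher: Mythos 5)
Your proposal is correct and follows essentially the same route as the paper: it hinges on the same two facts — exponential closeness of $\przero{\cdot}$ under the $\Jj$-coupling (which you prove by a per-gate Taylor remainder plus a hybrid bound, where the paper uses a Feynman-path-integral expansion, both yielding $1/\exp(n)$), and $O(1/\poly(n))$ total-variation closeness of $\Random$ and $\Done$ via the per-gate Weyl eigenvalue-density argument. The only organizational difference is that you package the coupling as an explicit deterministic map $T$ instantiated at $C=\mathrm{id}$ and apply it in one shot, whereas the paper carries a second Haar-random circuit through Lemmas \ref{Lemma:One} and \ref{Lemma:Two} and discards it at the end — a streamlining of presentation rather than a different argument.
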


From this, one has the following immediate corollary:

\begin{cor}
Conjecture \ref{missing-conjecture} implies Theorem \ref{averagecasethm}.
\end{cor}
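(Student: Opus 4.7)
The plan is to reduce the task of computing probabilities on $\Random$ to the task of computing probabilities on $\Dtwo$ by passing through the intermediate distribution $\Done$. Two elementary facts drive the argument: (i) a typical draw from $\Done$ can be coupled, via the joint distribution $\Jj$, to a draw from $\Dtwo$ that is exponentially close in operator norm; and (ii) the distributions $\Random$ and $\Done$ are close in total-variation distance, specifically $d_{TV}(\Random,\Done) = O(m\theta) = 1/\poly(n)$ when $\theta = 1/\poly(n)$ and $m = \poly(n)$.

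First I would build an intermediate algorithm $\mathcal{O}''$ that, for circuits $C'' \sim \Done$, $(\epsilon + 1/{\sf exp}(n), \delta)$-computes $\przero{C''}$. Given $C''$, I would recover the underlying Haar-random gates $H_j$ by diagonalizing each gate $G_j'' = H_j e^{-i h_j \theta}$ and rescaling eigenphases by $1/(1-\theta)$ (well-defined on the support of $\Done$, where eigenphases lie in $(-\pi(1-\theta),\pi(1-\theta)]$). I would then form the truncated circuit $\tilde C$ with gates $H_j \sum_{k=0}^K \frac{(-i h_j \theta)^k}{k!}$ and return $\mathcal{O}(\tilde C)$. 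By the definition of $\Jj$, the pair $(C'',\tilde C)$ is a sample from $\Jj_I$, so the marginal distribution of $\tilde C$ is exactly $\Dtwo$; hence $\mathcal{O}$'s guarantee yields $|\mathcal{O}(\tilde C) - \przero{\tilde C}| \leq \epsilon$ with probability $1-\delta$ over this coupling. The per-gate truncation error is $\frac{(\pi\theta)^{K+1}}{(K+1)!} = 1/{\sf exp}(n)$ for $\theta = 1/\poly(n)$ and $K = \poly(n)$ sufficiently large, and telescoping across $m = \poly(n)$ gates yields $|\przero{\tilde C} - \przero{C''}| = 1/{\sf exp}(n)$. Combining, $|\mathcal{O}''(C'') - \przero{C''}| \leq \epsilon + 1/{\sf exp}(n)$ with probability $1-\delta$ over $C'' \sim \Done$.

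Second I would transfer $\mathcal{O}''$ from $\Done$ to $\Random$ by simply setting $\mathcal{O}'(C') := \mathcal{O}''(C')$ and invoking the bound $d_{TV}(\Random,\Done) \leq O(m\theta)$. The key observation is that each gate of $\Done$ is $H^{1-\theta}$ for $H \sim$ Haar, which merely contracts the eigenphases by a factor of $1-\theta$ relative to the Haar measure; a short calculation (cleanly phrased via Weyl's integration formula on the eigenvalues) gives a per-gate TV perturbation of $O(\theta)$, and these compose over $m$ independent gates to give $1/\poly(n)$. Applying the standard TVD inequality to the success event of $\mathcal{O}''$ then yields success probability at least $1-\delta-1/\poly(n)$ when $C' \sim \Random$, delivering $\epsilon' = \epsilon + 1/{\sf exp}(n)$ and $\delta' = \delta + 1/\poly(n)$.

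The main subtlety I expect to confront is that when $\mathcal{O}'$ runs on an input $C' \sim \Random$ lying outside the support of $\Done$---which happens for a $1/\poly(n)$-fraction of inputs, as the phase-rescaling step carries some phase outside the principal branch---the algorithm may produce meaningless output. This is absorbed cleanly by the TVD argument: $\mathcal{O}'$ is allowed to fail on an event of measure at most $d_{TV}(\Random,\Done)$, well within the allotted $1/\poly(n)$ slack in $\delta'$. A secondary, purely technical point is that the eigen-decomposition, phase rescaling, and truncated exponential must be carried out to inverse-exponential precision, which is routine in polynomial time for the constant-dimensional gates considered here.
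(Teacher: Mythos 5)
Your proposal is correct and follows essentially the same approach as the paper. The paper's own proof of this corollary is a two-line contrapositive appealing to Theorem~\ref{thm:necessity}, whose proof (via Lemma~\ref{Lemma:One}, Lemma~\ref{Lemma:Two}, Fact~\ref{fact:closeness-of-output}, and Fact~\ref{fact:closeness-of-input}) is exactly what you have reconstructed: couple $\Done$ to $\Dtwo$ via the joint distribution $\Jj$ so that output probabilities are within $1/\exp(n)$, then move from $\Done$ to $\Random$ by the $O(m\theta) = 1/\poly(n)$ total-variation bound derived from the Weyl eigenvalue density. Your observation that each $\Done$-gate is $H^{1-\theta}$, and your explicit treatment of inputs falling outside $\Done$'s support as being absorbed by the TVD slack in $\delta'$, are both consistent with (and slightly more careful than) the paper's footnoted argument. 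The only cosmetic gap is that you stop once the reduction $\mathcal{O}'$ is built and leave the final contrapositive step -- ``if approximating over $\Hh$ is $\SHARPP$-hard then exactly computing over $\Dtwo$ is $\SHARPP$-hard'' -- implicit; the paper states this last sentence explicitly, and it is what formally derives the corollary from the reduction.
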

\begin{proof}
If there is an algorithm exactly computing probabilities over $\Dtwo$, then there is an algorithm approximately computing probabilities over $\Hh$. Therefore, if approximately computing probabilities over $\Hh$ is $\SHARPP$-hard, then exactly computing probabilities over $\Dtwo$ is $\SHARPP$-hard as well.
\end{proof}

In other words, our main result is necessary for the quantum supremacy conjecture (Conjecture \ref{missing-conjecture}) to be true.

We start proving Theorem \ref{thm:necessity} by establishing two facts which relate the distributions of circuits drawn from the joint distribution $\Jj$.  A natural interpretation of Facts \ref{fact:closeness-of-output} and \ref{fact:closeness-of-input} is as statements about the proximity of output probabilities and input distributions, respectively. Fact \ref{fact:closeness-of-output} states that the output probabilities of circuits drawn from the joint distribution $\Jj$ are effectively the same. Fact \ref{fact:closeness-of-input} states the perturbed distribution is essentially Haar -- therefore, choosing the inputs from the Haar distribution or the perturbed Haar distribution is immaterial.

\begin{fact}
Let $\Aa$ be an architecture over circuits and $C$ a circuit in the architecture. Let $(C_1,C_2)$ be circuits drawn from $\Jj$. Then the zero probabilities of $C_1$ and $C_2$ are close; namely,
\[
\abs{\przero{C_1} - \przero{C_2}} \leq 2^{-\poly(n)}.
\]
\label{fact:closeness-of-output}
\end{fact}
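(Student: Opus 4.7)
The plan is to reduce Fact~\ref{fact:closeness-of-output} to a per-gate Taylor-remainder estimate and then propagate that estimate through the circuit by a standard hybrid argument. Recall that under $\Jj$ the two circuits $C_1$ and $C_2$ use the \emph{same} Haar unitaries $\{H_j\}$; gate $j$ of $C_1$ is $C_j H_j e^{-i h_j \theta}$ while gate $j$ of $C_2$ is $C_j H_j \bigl(\sum_{k=0}^{K} (-i h_j \theta)^k/k!\bigr)$, where $h_j = -i\log H_j$. Choosing the principal branch of $\log$ makes $h_j$ Hermitian with $\|h_j\|_{\mathrm{op}} \le \pi$.

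First I would bound the single-gate discrepancy. Since $C_j$ and $H_j$ are unitary, the operator-norm distance between the two versions of gate $j$ equals the distance between $e^{-i h_j \theta}$ and its $K$-th Taylor polynomial. By the standard remainder estimate,
\[
\Bigl\| e^{-i h_j \theta} - \sum_{k=0}^{K} \tfrac{(-i h_j \theta)^k}{k!} \Bigr\|_{\mathrm{op}} \;\le\; \sum_{k=K+1}^{\infty} \tfrac{(\pi \theta)^k}{k!} \;\le\; \tfrac{(\pi \theta)^{K+1}}{(K+1)!}\cdot e^{\pi\theta}.
\]
With $\theta = 1/\poly(n)$ and $K = \poly(n)$ this quantity is clearly $2^{-\poly(n)}$, and in fact can be made smaller than any prescribed inverse exponential by inflating $K$ by a constant factor.

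Next I would convert the per-gate bound into a circuit-level bound via a hybrid argument. Let $\eta$ denote the per-gate norm bound above. Writing $C_1 = G^{(1)}_m \cdots G^{(1)}_1$ and $C_2 = G^{(2)}_m \cdots G^{(2)}_1$, swap one gate at a time and use the fact that all factors are contractions; then $\|C_1 - C_2\|_{\mathrm{op}} \le m \eta \le \poly(n)\cdot 2^{-\poly(n)} = 2^{-\poly(n)}$. Finally, since $|\langle 0^n|C_1|0^n\rangle| , |\langle 0^n|C_2|0^n\rangle| \le 1$, the identity $|a|^2-|b|^2 = (|a|-|b|)(|a|+|b|)$ together with $||a|-|b||\le|a-b|\le\|C_1-C_2\|_{\mathrm{op}}$ gives
\[
\bigl|\przero{C_1} - \przero{C_2}\bigr| \;\le\; 2\,\|C_1-C_2\|_{\mathrm{op}} \;\le\; 2^{-\poly(n)},
\]
which is exactly the claim.

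There is no real obstacle here—the statement is essentially a book-keeping exercise combining the Taylor remainder with a telescoping swap. The one subtlety worth flagging is the need to pick the principal-branch logarithm so that $\|h_j\|_{\mathrm{op}}$ is bounded independent of $H_j$; without this, $\|h_j\|_{\mathrm{op}}$ could be arbitrarily large and the per-gate Taylor bound would degrade. With that choice fixed, the remaining parameters $\theta = 1/\poly(n)$ and $K = \poly(n)$ chosen in Theorem~\ref{averagecasethm} give ample margin to absorb the $\poly(n)$ factor from the number of gates.
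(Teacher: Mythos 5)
Your proof is correct, but it takes a genuinely different (and in fact cleaner) route than the paper. The paper's proof expands $\abs{\ev{C_1}{0}-\ev{C_2}{0}}$ via the Feynman path integral over all $2^{n(m-1)}$ intermediate basis-string sequences and bounds each summand by the per-gate Taylor remainder; this produces an exponential prefactor $2^{O(nm)}$ which must then be absorbed by choosing the truncation order $K=\poly(n)$ large enough. You instead use a standard telescoping/hybrid decomposition of $C_1-C_2$ in operator norm, which converts the per-gate remainder $\eta$ into a circuit-level bound of only $m\eta$, i.e.\ a polynomial rather than exponential prefactor. This imposes a strictly milder requirement on how large $K$ must be and is arguably the more natural argument; the final passage from amplitude to probability via $\abs{|a|^2-|b|^2|}\le 2\,|a-b|$ is the same in spirit as the paper's remark that amplitudes of norm at most $1$ being close implies the probabilities are at least as close.

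One small imprecision worth fixing: you assert that ``all factors are contractions,'' but the truncated-Taylor gate $G^{(2)}_j = C_j H_j \sum_{k\le K}(-ih_j\theta)^k/k!$ is not exactly unitary, only $\eta$-close to unitary, so $\|G^{(2)}_j\|_{\mathrm{op}}\le 1+\eta$. The telescoping sum therefore picks up a factor $(1+\eta)^{m-1}$, which is $1+2^{-\poly(n)}\cdot\poly(n)$ and hence harmless, but you should say so rather than claiming contractivity outright. Relatedly, $\przero{C_2}$ may exceed $1$ by a $2^{-\poly(n)}$ amount for the same reason; this is fine given the definition $\przero{C}=\abs{\mel{0^n}{C}{0^n}}^2$, but it means the clean inequality $|b|\le 1$ used in the factorization step should be relaxed to $|b|\le 1+2^{-\poly(n)}$. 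Your explicit observation that the principal branch of $\log$ must be used to guarantee $\|h_j\|_{\mathrm{op}}\le\pi$ is a genuine point of care that the paper elides (it simply invokes ``some constant $\kappa$''), and is worth retaining.
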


\begin{proof}
By expanding the exponential as a Taylor series, we can express each gate $C_{1,j}$ and $C_{2,j}$ of $C_1$ and $C_2$, respectively, as
\[
C_{1,j} = C_j H_j \left( \sum_{k = 0}^\infty \frac{(-ih_j\theta)^k}{k!} \right); \qquad 
C_{2,j} = C_j H_j \left( \sum_{k = 0}^K \frac{(-ih_j\theta)^k}{k!} \right).
\]
Therefore, $C_{1,j} - C_{2,j} = C_j H_j \left( \sum_{k = K+1}^{\infty} \frac{(-i h_j \theta)^k}{k!} \right)$. We can apply the standard bound on Taylor series to bound $\abs{\mel{y_j}{C_{1,j} - C_{2,j}}{y_{j-1}}} \leq \frac{\kappa}{K!}$ for some constant $\kappa$. Applying this to a Feynman path integral,
\[
\abs{\ev{C_1}{0} - \ev{C_2}{0}} \leq \sum_{y_1, \ldots, y_m} \abs{\prod_{j=1}^m \mel{y_j}{C_{1,j}}{y_{j-1}} - \prod_{j=1}^m \mel{y_j}{C_{2,j}}{y_{j-1}}   } \leq 2^{n(m-1)} \cdot \Oo\left( \frac{m\kappa}{K!} \right) = \frac{2^{O(nm)}}{(K!)^m}.
\]
This proves that the amplitudes are close. As the amplitudes have norm at most 1, then the probabilities are at least as close. The result follows by a sufficiently large choice of $K=\poly(n)$.
\end{proof}

\begin{fact}
Let $\Aa$ be an architecture on circuits with $m$ gates and $C \in \Aa$ a circuit from that architecture. Then the distribution $\Hh$ and $\Done$ are $O(1/\poly(n))$ close in total variation distance.
\label{fact:closeness-of-input}
\end{fact}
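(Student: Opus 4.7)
The plan is to reduce via tensorization to a single-gate TV bound, then handle the single-gate case using left-invariance of the Haar measure together with the explicit eigenvalue structure of Haar on $U(d)$. Both $\Hh$ and $\Done_C$ factor as products of independent distributions across the $m$ gate locations: under $\Hh$ each gate is Haar on $U(d)$ with $d\in\{2,4\}$, while under $\Done_C$ the $j$-th gate is distributed as $C_j H_j e^{-ih_j\theta}$ with $H_j$ independently Haar and $h_j := -i\log H_j$. By the standard tensorization inequality $\mathrm{TV}(\bigotimes_j \mu_j,\bigotimes_j \nu_j) \le \sum_j \mathrm{TV}(\mu_j,\nu_j)$, it suffices to show each single-gate marginal TV is $O(1/(m\,\poly(n)))$.

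For a single gate, the TV is invariant under the bijection of left-multiplication by $C_j^{-1}$, and left-multiplication sends Haar to Haar, so the single-gate TV equals $\mathrm{TV}(\text{Haar},\phi_\theta(\text{Haar}))$ where $\phi_\theta(H) := He^{-ih\theta}$. Writing $H = e^{ih}$ (on the full-measure set where $\log$ is defined), this pushforward is $\phi_\theta(e^{ih}) = e^{ih(1-\theta)}$, so $\phi_\theta$ preserves eigenvectors and scales each eigenvalue angle by the factor $(1-\theta)$. Under the Weyl decomposition $H = V\Lambda V^\dagger$, with $V$ independent of $\Lambda = \diag(e^{i\lambda_1},\ldots,e^{i\lambda_d})$ and the eigenvalue angles having joint density $\rho_d(\lambda) \propto \prod_{k<\ell}|e^{i\lambda_k}-e^{i\lambda_\ell}|^2$ on $(-\pi,\pi]^d$, the map $\phi_\theta$ acts trivially on $V$, so the single-gate TV collapses to the TV between $\rho_d$ and its pushforward under the scaling $\lambda\mapsto(1-\theta)\lambda$.

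For the constant dimension $d\le 4$, the Weyl density $\rho_d$ is uniformly bounded and globally Lipschitz on the torus, so a direct change-of-variables computation shows this eigenvalue TV is $O(d\theta) = O(\theta)$: the $\rho_d$-mass lying outside the shrunken box $(-\pi(1-\theta),\pi(1-\theta)]^d$ is $O(d\theta)$ by boundedness of $\rho_d$, while on the overlap the pointwise density difference between $\rho_d(\lambda)$ and $(1-\theta)^{-d}\rho_d(\lambda/(1-\theta))$ is $O(d\theta)$ using $(1-\theta)^{-d}=1+O(d\theta)$ and the Lipschitz continuity of $\rho_d$. Combining with the tensorization step yields $\mathrm{TV}(\Hh,\Done_C) = O(m\theta)$, which is $O(1/\poly(n))$ whenever $\theta = 1/\poly(n)$ with a sufficiently large polynomial (consistent with the choice of $\theta$ in Theorem~\ref{averagecasethm}). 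The main technical step — and essentially the only place with real work — is the change-of-variables calculation above; it is routine but requires careful accounting for the boundary region on which the pushforward density vanishes, which is the obstacle one must resolve.
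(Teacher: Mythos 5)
Your proof is correct and follows essentially the same route as the paper's: tensorize to a single-gate TV bound, use left-invariance of Haar to reduce to comparing Haar with its $\theta$-pull-back, pass to the Weyl eigenvalue density, and bound the eigenvalue TV using boundedness and Lipschitz continuity of that density. Your change-of-variables treatment of the pushforward (Jacobian factor $(1-\theta)^{-d}$ together with the rescaled argument $\lambda/(1-\theta)$) is actually slightly more careful than the paper's, which writes the perturbed eigenvalue density as a simple truncation of the Weyl density rather than its true pushforward --- though the $O(\theta)$ conclusion is the same either way.
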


\begin{proof}

To prove this, we will show that for any particular gate of the circuit, the distributions induced by $\Hh$ and $\Done$ are $O(\theta)$ close in total variation distance. Then the additivity of total variation distance for independent events implies that the distributions are $O(m\theta)$-close (i.e. if $D$ and $D'$ are $\epsilon$-close in total variation distance, then $n$ independent copies of $D$ are $n\epsilon$-close to $n$ independent copies of $D'$). The result then follows from a suitably small choice of $\theta=1/\poly(n)$.

Now consider the distributions $\Hh$ and $\Done$ on a single two-qubit gate. 
Since the Haar measure is left-invariant, the distance between these is equivalent to the distance between $C\cdot \Hh$ and $\Done=C\cdot\Hh_{\theta}$.
Since total variation distance is invariant under left multiplication by a unitary, this is equivalent to the distance between $\Hh$ and $\Hh_{\theta}$.

Intuitively, the reason these are $O(\theta)$ close is as follows: consider a random rotation in $SO(3)$, vs. a random rotation in $SO(3)$ which has been ``pulled back'' towards the identity. By construction, the axes of rotations will be uniformly random over the sphere in both distributions. The only difference between the distributions lies in their angles of rotation -- the former's angle of rotation is uniform in $[0,2\pi]$ while the latter's is uniform in $[0,2\pi(1-\theta)]$. These distributions over angles are clearly $\theta$-close in total variation distance. This immediate implies these distributions over matrices are $\theta$-close in total variation distance as well since matrices are uniquely defined by the eigenbasis and eigenvalues.

We can extend this logic to the two-qubit case as well.
By construction the distributions $\Hh$ and $\Hh_{\theta}$ will be diagonal in a uniformly random basis $U$ (since ``pulling back'' a matrix $A$ by $e^{-i\theta \log A}$ preserves the eigenbasis).
Hence the only difference between these distributions lies in their distribution over eigenvalues.
We will show their distribution over eigenvalues are $O(\theta)$ close in total variation distance, which will imply the claim.
In particular, the distribution of eigenvalues $e^{i\theta_1},e^{i\theta_2},e^{i\theta_3},e^{i\theta_4}$ of a two qubit gate drawn from $\Hh$ is given by the density function, due to Weyl (e.g. \cite{diaconis1994eigenvalues}),
\[\Pr[\theta_i = \hat{\theta_i}] \propto \displaystyle\prod_{i\neq j} \left|e^{i\hat{\theta_i}} - e^{i\hat{\theta_j}}\right|^2.\]
In contrast the distribution over eigenvalues of a two-qubit gate drawn from $\Hh_{\theta}$ is
\[\Pr[\theta_i = \hat{\theta_i}] \propto \begin{cases} 
0 & \exists i:\hat{\theta_i} \geq 2\pi(1-\theta) \\
\displaystyle\prod_{i\neq j} \left|e^{i\hat{\theta_i}} - e^{i\hat{\theta_j}}\right|^2 & o.w.
\end{cases}
\]
One can easily compute that the total variation distance between these measures is $O(\theta)$, which implies the claim. This simply uses the fact that the above density function is smooth and Lipschitz, so a version of the same density function which has been ``shifted'' by $\theta$ is $O(\theta)$ close in total variation distance.
\end{proof}

Armed with these facts we are now ready to prove Theorem \ref{thm:necessity}.  We divide the proof into two steps, encapsulated into two lemmas (Lemmas \ref{Lemma:One}, \ref{Lemma:Two}).  In the first, we show how to use an algorithm that works on average over circuits drawn from $\Dtwo$ to get an algorithm that works on average over pairs of circuits drawn from $\Random$ and $\Done$.  

\begin{lemma}\label{Lemma:One}
Suppose there exists an algorithm $\mathcal{O}$ that for any circuit $C$ from a fixed architecture $\mathcal{A}$ takes as input a circuit $C_2$ sampled from $\Dtwo$ and $(\epsilon,\delta)$-computes the $\przero{C_2}$ probability.  Then there exists an algorithm $\mathcal{O}'$ that receives as input a Haar random circuit $C$ as well as a sample $C_1$ from $\Done$ and $(\epsilon',\delta)$-computes the $\przero{C_1}$ probability, where $\epsilon'=\epsilon+1/\exp(n)$.	
\end{lemma}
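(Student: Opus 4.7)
\medskip

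\noindent\textbf{Proof proposal.} The plan is to exhibit $\mathcal{O}'$ by exploiting the natural coupling $\Jj_C$ between $\Done_C$ and $\Dtwo_C$: given $(C,C_1)$ with $C_1 \sim \Done_C$, we will manufacture a companion circuit $C_2$ such that $(C_1,C_2) \sim \Jj_C$, and then simply return $\mathcal{O}(C_2)$. The correctness follows because the marginal of $\Jj_C$ on the second coordinate is $\Dtwo_C$ (so $\mathcal{O}$'s success guarantee applies), while Fact~\ref{fact:closeness-of-output} certifies that $\przero{C_1}$ and $\przero{C_2}$ agree up to inverse-exponential error.

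The key subroutine is the reconstruction of the Haar-random unitaries $\{H_j\}$ used to generate $C_1$. For each gate index $j$, the algorithm has access to $C_j$ (read off the worst-case circuit $C$) and to the corresponding gate $(C_1)_j = C_j \cdot H_j e^{-i h_j \theta}$ of $C_1$. Multiplying on the left by $C_j^{-1}$ yields
\[
U_j \;\defeq\; C_j^{-1}(C_1)_j \;=\; H_j e^{-i h_j \theta} \;=\; e^{i h_j (1-\theta)},
\]
using $H_j = e^{i h_j}$. Taking the principal matrix logarithm of $U_j$, dividing by $(1-\theta)$, and re-exponentiating recovers $H_j$ exactly (in the idealized continuous model) or to arbitrary digital precision using standard numerical routines on the constant-dimensional gate matrices. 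With $\{H_j\}$ in hand, $\mathcal{O}'$ constructs $C_2$ by setting its $j$-th gate to
\[
(C_2)_j \;=\; C_j H_j \!\left(\sum_{k=0}^{K}\frac{(-i h_j \theta)^k}{k!}\right),
\]
which is precisely the image of the same $\{H_j\}$ under $\Dtwo_C$. By construction the resulting pair $(C_1,C_2)$ is distributed as $\Jj_C$, so the marginal law of $C_2$ is exactly $\Dtwo_C$. The algorithm then outputs $\mathcal{O}(C_2)$.

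For the analysis, let $E$ denote the event (over the joint draw) that $\mathcal{O}$ succeeds on $C_2$. Since $C_2$ has marginal law $\Dtwo_C$, we have $\Pr[E] \geq 1-\delta$ and on $E$, $|\mathcal{O}(C_2) - \przero{C_2}| \leq \epsilon$. Combining with Fact~\ref{fact:closeness-of-output}, which gives $|\przero{C_1}-\przero{C_2}| \leq 2^{-\poly(n)}$ deterministically on every joint draw, the triangle inequality yields $|\mathcal{O}(C_2) - \przero{C_1}| \leq \epsilon + 2^{-\poly(n)}$ with probability at least $1-\delta$ over $C_1 \sim \Done_C$ and the coupled randomness. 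Setting $\epsilon' = \epsilon + 1/\exp(n)$ gives the claimed $(\epsilon',\delta)$-computation.

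The main subtlety I foresee is the branch-of-logarithm issue in the reconstruction step: the map $H \mapsto e^{i h (1-\theta)}$ is only locally invertible, and one must verify that the principal-branch inverse recovers the correct $H_j$ for almost every Haar draw, with only a measure-zero exceptional set (the spectrum of $H_j$ hitting the branch cut). Since this exceptional event has probability zero and we can absorb it into $\delta$ without affecting the stated bounds, and since the matrices in question have constant size (dictated by the architecture), the reconstruction is efficient. Everything else is bookkeeping: the construction of $\mathcal{O}'$ runs in polynomial time, invokes $\mathcal{O}$ once, and inherits its success probability up to the inverse-exponential slack supplied by Fact~\ref{fact:closeness-of-output}.
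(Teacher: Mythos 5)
Your proposal is correct and follows essentially the same route as the paper: couple $\Done_C$ and $\Dtwo_C$ via $\Jj_C$, recover the Haar matrices $\{H_j\}$ from $(C,C_1)$ by left-multiplying by $C_j^\dagger$ and stretching the eigenvalues (equivalently, taking the matrix log and rescaling by $1/(1-\theta)$), reassemble $C_2$ via the truncated Taylor series, run $\mathcal{O}(C_2)$, and invoke Fact~\ref{fact:closeness-of-output} plus the triangle inequality. Your explicit attention to the branch-of-logarithm issue and its measure-zero exceptional set is a welcome clarification of a point the paper handles only implicitly (``with probability $1$'').
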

\begin{proof}
This lemma is primarily a consequence of Fact \ref{fact:closeness-of-output}.  Our objective in the proof will be to develop an algorithm $\mathcal{O}'$ that, given a circuit $C_1$ from the \taylor\ infers the corresponding circuit $C_2$ from the \trunctaylor.  Once it does this, it simply returns the output of $\mathcal{O}$ run on input $C_2$.  

More formally, consider an algorithm $\mathcal{O}'$ that is given as input $C$, as well as a pair of circuits $(C_1,C_2)\sim \Jj$, where $\Jj$ is the joint distribution with respect to $C$.  Then $\mathcal{O}'$ runs $\mathcal{O}$ on input $C_2$.  Clearly, from Fact \ref{fact:closeness-of-output}, the output probabilities of $C_1$ and $C_2$ are exponentially close, so we can see that $\mathcal{O}'$ $(\epsilon+1/\exp(n),\delta)$-computes the quantity $\przero{C_1}$.

Now by averaging over $C$, we see that in fact $\mathcal{O}'$ $(\epsilon+1/\exp(n),\delta)$-computes $\przero{C_1}$ with respect to a distribution over triplets of circuits $(C,C_1,C_2)$ in which $C$ is Haar random and the pair $(C_1,C_2)$ is distributed via the corresponding joint distribution $\Jj$.  Next notice that instead of receiving the triplet of inputs $(C,C_1,C_2)$, $\mathcal{O}'$ could simply have received a Haar random circuit $C$ and a circuit $C_1$ drawn from $\Done$.  This is because it can infer\footnote{This is simply because one can left-multiply $C_1$ by $C^\dagger$ to obtain the element drawn from $\Hh_{\theta}$. As $\theta$ is fixed beforehand, the algorithm can then deduce the corresponding element drawn from $\Hh$ with probability 1 by simply diagonalizing and stretching the eigenvalues by $1/(1-\theta)$. It can then compute the truncated Taylor series to obtain $C_2$.} the truncated circuit $C_2$ directly from $C$ and $C_1$. The Lemma follows.
\end{proof} \\

Next, we show how to use this new algorithm $\mathcal{O}'$ that works on average over pairs of circuits drawn from $\Random$ and $\Done$ to get an algorithm $\mathcal{O}''$ that works on average over circuits drawn from $\Random$.  
\begin{lemma}\label{Lemma:Two}
Suppose there exists an algorithm $\mathcal{O}'$ that takes as input a Haar random circuit $C$ from a fixed architecture $\mathcal{A}$ as well as a circuit $C_1$ drawn from $\Done$, and $(\epsilon,\delta)$-computes the $\przero{C_1}$ probability.  Then there exists an algorithm $\mathcal{O}'$ that $(\epsilon,\delta')$-computes the $\przero{C}$ probability with respect to input circuits $C$ drawn from $\Random$, with $\delta'=\delta+1/\poly(n)$.	
\end{lemma}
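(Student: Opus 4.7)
The plan is to build $\mathcal{O}''$ by supplying a freshly sampled Haar-random circuit as the ``first'' input to $\mathcal{O}'$ and recycling the actual input circuit as the ``second'' input; the approximation loss will come from invoking Fact \ref{fact:closeness-of-input}. Concretely, $\mathcal{O}''$ on input $C \sim \Random$ samples an independent $\tilde{C} \sim \Hh$ gate-by-gate using its own internal randomness and outputs $\mathcal{O}'(\tilde{C}, C)$. Efficiency of $\mathcal{O}''$ is immediate from efficiency of $\mathcal{O}'$ and of sampling Haar-random gates.

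To argue correctness, I compare two joint distributions on pairs of circuits. Let $J_{\mathrm{exp}}$ be the distribution $\mathcal{O}'$ is designed for: draw $\tilde{C} \sim \Hh$ and then, conditional on $\tilde{C}$, draw $C_1 \sim \Done_{\tilde{C}} = \tilde{C} \cdot \Hh_{\theta}$. Let $J_{\mathrm{act}}$ be the distribution our reduction feeds: draw $\tilde{C}$ and $C$ independently from $\Hh$. Fact \ref{fact:closeness-of-input} tells us that $\Hh$ and $\Done_{\tilde{C}}$ are $O(1/\poly(n))$-close in total variation distance for every fixed $\tilde{C}$. Since both joint distributions share the same marginal $\Hh$ on the first coordinate $\tilde{C}$, averaging the conditional TV bound over $\tilde{C}$ gives $\|J_{\mathrm{exp}} - J_{\mathrm{act}}\|_{TV} = O(1/\poly(n))$.

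By hypothesis $\mathcal{O}'$ $(\epsilon,\delta)$-computes $\przero{C_1}$ on $J_{\mathrm{exp}}$. Since ``$\mathcal{O}'$ returns a value within $\epsilon$ of $\przero{C_1}$'' is an event determined by the input-output pair, its probability shifts by at most the TV distance when we switch from $J_{\mathrm{exp}}$ to $J_{\mathrm{act}}$. Hence $\mathcal{O}'(\tilde{C}, C)$ $\epsilon$-approximates $\przero{C}$ on at least a $1 - \delta - O(1/\poly(n))$ fraction of inputs drawn from $J_{\mathrm{act}}$. The marginal of $C$ under $J_{\mathrm{act}}$ is exactly $\Random$, so marginalizing over the internal choice of $\tilde{C}$ shows that $\mathcal{O}''$ is the desired $(\epsilon, \delta')$-algorithm with $\delta' = \delta + 1/\poly(n)$.

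There is no substantial obstacle here --- the argument is a routine change-of-measure riding on Fact \ref{fact:closeness-of-input}. The only delicacy is conceptual bookkeeping: one must apply Fact \ref{fact:closeness-of-input} pointwise in $\tilde{C}$ before averaging so as to bound the \emph{joint} TV distance, rather than naively comparing only the marginals of the second coordinate, which would lose information about the correlation (or here, lack thereof) between the two inputs.
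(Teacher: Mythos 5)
Your proof is correct and follows essentially the same route as the paper's: use Fact \ref{fact:closeness-of-input} to bound the total variation distance between the intended input distribution of $\mathcal{O}'$ (Haar $\times$ $\Done$) and the product of two independent Haar draws, shift the success event by at most that TV distance, and then exploit independence to have the final algorithm generate the auxiliary circuit internally. Your added care in conditioning on $\tilde{C}$ before averaging to lift the per-$\tilde{C}$ TV bound to the joint distribution is a small sharpening of a step the paper states without justification, but the overall argument is the same.
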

\begin{proof}
This lemma is a direct consequence of Fact \ref{fact:closeness-of-input}.  In particular, Fact \ref{fact:closeness-of-input} implies that the input distribution to the algorithm $\mathcal{O}'$, in which the first input circuit is Haar random and the second is drawn from $\Done$, is $1/\poly(n)$-close in total variation distance to the distribution over pairs of independently drawn Haar random circuits which we refer to as $\Hh^{(2)}$.  

Note total variation distance can be interpreted as the supremum over events of the difference in probabilities of those events.
Considering the event that $\mathcal{O}'$ is approximately correct in its computation of $\przero{C_1}$, this means if $\mathcal{O}'$ is run on inputs from the distribution $\Hh^{(2)}$ instead of from $C \sim \Hh$ and $C_1\sim\Done$, it will still be correct with high probability.
So $\mathcal{O}'$ will $(\epsilon,\delta+1/\poly(n))$-compute $\przero{C_1}$ with respect to this new distribution $\Hh^{(2)}$.  Now these two input circuits are independently drawn, and so $\mathcal{O'}$ can discard the unused input circuit.  We arrive at our Lemma.
\end{proof} \\

The results from Lemmas \ref{Lemma:One} and \ref{Lemma:Two} together prove Theorem \ref{thm:necessity}.

\subsection{Proof of Theorem \ref{averagecasethm}}
\label{subsec:proof-worst-to-avg}

\begin{customthm}{\ref{averagecasethm}}
Let $\Aa$ be an architecture so that computing  $\przero{C}$ to within additive precision $2^{-\poly{(n)}}$, for any $C$ over $\mathcal A$ is  $\SHARPP$-hard in the worst case.  Then it is $\SHARPP$-hard to compute 8/9 of the probabilities $\przero{C'}$ over the choice of $C'$  from the distributions $\Dtwo_{C} \defeq C \cdot \Hh_{\theta,K}$ where $\theta = 1/\poly(n)$, $K = \poly(n)$.
\end{customthm}

The proof of Theorem \ref{averagecasethm} follows by demonstrating the inherent \emph{polynomial} structure of the problem and leveraging the structure via polynomial interpolation to equate average-case and worst-case hardness. \\

\begin{proof}
Let $\{H_j\}$ be a collection of independent Haar random gates and define \[H_j'(\theta) = H_j \sum_{k = 0}^{K} \frac{(-i h_j \theta)^k}{k!}\] where $h_j = -i \log H_j$. Define the circuit $C'(\theta)$ as $C \cdot H'(\theta)$. Let $q(\theta) = \przero{C'(\theta)}$.

Notice that for a fixed choice of $\{H_j\}$, $q(\theta)$ is a low-degree polynomial in $\theta$. By a Feynman path integral (Fact \ref{fact-feynmanpathintegral}),
\begin{equation*}
\mel{y_m}{C'(\theta)}{y_0} = \sum_{y_1, \ldots, y_m \in \{0,\ldots, d-1\}^n} \prod_{j = 1}^m \mel{y_j}{[C'(\theta)]_j}{y_{j-1}}
\end{equation*}
is a polynomial of degree $mK$ as each term $\mel{y_j}{[C_1(\theta)]_j}{y_{j-1}}$ is a polynomial of degree $K$. Therefore, $q$ is a polynomial of degree $2mK$. Now assume that there exists a machine $\Oo$ such that $\Oo$ can compute $\przero{C'}$ for $8/9$ of $C'$ where $C'$ is drawn from the distribution in the theorem statement. A simple counting argument shows that for at least $2/3$ of the choices of $\{H_j\}$, $\Oo$ correctly computes $\przero{C'(\theta)}$ for at least $2/3$ of $\theta$. Call such a choice of $\{H_j\}$ good.

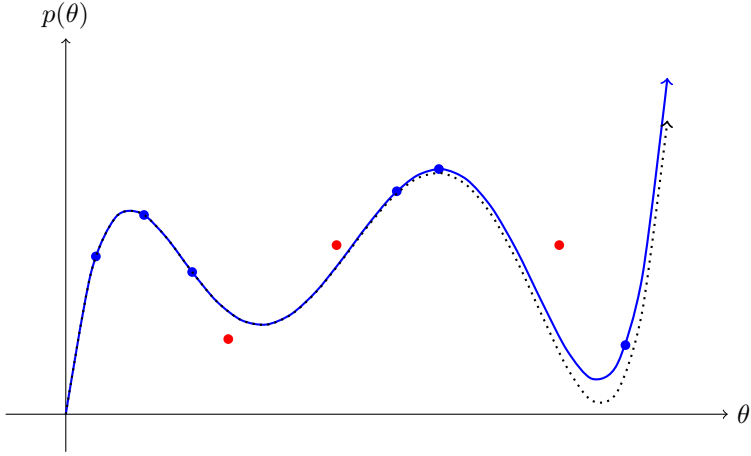
\begin{figure}[!ht]
\centering
\begin{tikzpicture}[xscale=8.0,yscale=5.0]
      \draw[->] (-0.1,0) -- (1.1,0) node[right] {$\theta$};
      \draw[->] (0,-0.1) -- (0,1.0) node[above] {$p(\theta)$};
      \draw[domain=0:1,smooth,variable=\x,blue,->,thick] plot ({\x},{12.4968 * \x - 94.6583 * \x * \x + 268.974 * \x * \x * \x - 316.667 * \x * \x * \x * \x + 130.754 * \x * \x * \x * \x * \x});
      \draw (0.13,0.530206) node[circle,fill,inner sep=1.3pt, blue] {};
      \draw (0.55,0.593288) node[circle,fill,inner sep=1.3pt, blue] {};
      \draw (0.21,0.378409) node[circle,fill,inner sep=1.3pt, blue] {};
      \draw (0.93,0.183831) node[circle,fill,inner sep=1.3pt, blue] {};
      \draw (0.05,0.419878) node[circle,fill,inner sep=1.3pt, blue] {};
      \draw (0.62,0.652424) node[circle,fill,inner sep=1.3pt, blue] {};
      
      \draw (0.82,0.45) node[circle,fill,inner sep=1.3pt, red] {};
      \draw (0.27,0.20) node[circle,fill,inner sep=1.3pt, red] {};
      \draw (0.45,0.45) node[circle,fill,inner sep=1.3pt, red] {};
      
      \draw[domain=0:1,smooth,variable=\x,black, thick, dotted, ->] plot ({\x},{12.4968 * \x - 94.6583 * \x * \x + 268.974 * \x * \x * \x - 316.667 * \x * \x * \x * \x + 130.64 * \x * \x * \x * \x * \x});
      
    \end{tikzpicture}
\caption{Example of a true function $\przero{C}$ (dotted), inherent polynomial $q(\theta) = \przero{C'(\theta)}$ (solid), and potentially noisy samples $\{(\theta_\ell, \Oo(\theta_\ell))\}$.}
\end{figure}

Consider a machine $\Oo'$ with fixed $\theta_1, \ldots, \theta_k \in [0,\frac{1}{\poly(n)})$ that queries $\Oo(\theta_\ell)$ for $\ell = 1, \ldots, k$. Then $\Oo'$ applies the Berlekamp-Welch Algorithm \cite{welch1986error} to compute a degree $2mK$ polynomial $\tilde{q}$ from the points $\{ (\theta_\ell, \Oo(\theta_\ell))\}_{\ell = 1, \ldots, k}$ and returns the output $\tilde{q}(1)$.

\begin{theorem}[Berlekamp-Welch Algorithm \cite{welch1986error}]
Let $q$ be a degree $d$ univariate polynomial over any field $\FF$. Suppose we are given $k$ pairs of $\FF$ elements $\{(x_i,y_i)\}_{i = 1, \ldots, k}$ with all $x_i$ distinct with the promise that $y_i = q(x_i)$ for at least $\min(d + 1, (k + d)/2)$ points. Then, one can recover $q$ exactly in $\poly(k,d)$ deterministic time.
\end{theorem}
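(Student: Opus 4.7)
The plan is to use the classical Berlekamp--Welch trick of encoding the unknown error locations into a low-degree polynomial, turning the nonlinear decoding problem into a linear system. Let $e$ be the number of ``errors'' (indices $i$ with $y_i \neq q(x_i)$); by hypothesis $e \leq k - \min(d+1, (k+d)/2) \leq (k-d-1)/2$, so $2e+d < k$. First I would introduce an \emph{error locator} polynomial $E(x)$ of degree exactly $e$, normalized monic, whose roots are precisely the error locations $x_i$. Setting $N(x) = E(x) q(x)$, which has degree at most $e+d$, one gets the key identity
\[
N(x_i) = E(x_i)\, y_i \qquad \text{for all } i = 1, \ldots, k,
\]
since at error-free positions $N(x_i) = E(x_i) q(x_i) = E(x_i) y_i$, and at error positions both sides vanish because $E(x_i)=0$.

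Next I would recast the search for $(N,E)$ as a \emph{linear} system. Treat the $e+d+1$ coefficients of $N$ and the $e$ nonleading coefficients of $E$ as unknowns; the identity above becomes $k$ linear equations in these $2e+d+1 \leq k$ unknowns. Solving this system via Gaussian elimination takes $\poly(k,d)$ time. The decoder does not need to know $e$ in advance: one can try $e = 0, 1, 2, \ldots, \lfloor (k-d-1)/2 \rfloor$ and take the first value for which the system has a solution with $E \not\equiv 0$ and $E \mid N$, incurring only a $\poly(k,d)$ factor overhead.

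The crucial step --- and the only place any real argument is required --- is to show that any solution $(N, E)$ produced by the linear system actually satisfies $N/E = q$. Here I would argue by uniqueness: if $(N_1,E_1)$ and $(N_2,E_2)$ are two solutions (including the ``true'' solution coming from the genuine error locator and $N = E q$), then the polynomial $N_1 E_2 - N_2 E_1$ has degree at most $2e + d < k$ but vanishes at all $k$ distinct points $x_i$ (using $N_j(x_i) = E_j(x_i) y_i$), hence is identically zero. Therefore $N_1/E_1 = N_2/E_2$ as rational functions, and since the true solution yields $q$, every solution does. Consequently $q(x) = N(x)/E(x)$ can be recovered by polynomial division.

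The main ``obstacle'' is really just this uniqueness argument, which hinges on the degree bound $2e + d < k$ that our hypothesis $e \leq (k-d-1)/2$ guarantees; everything else is linear algebra over $\FF$ and a finite search over the value of $e$, both clearly deterministic and polynomial in $k$ and $d$. One subtlety worth flagging is the edge case $k < d+1$: then the promise $y_i = q(x_i)$ for $\min(d+1,(k+d)/2) = (k+d)/2$ points still gives $e < (k-d)/2$, which may be vacuous or force $e = 0$; in the latter case the algorithm reduces to straightforward Lagrange interpolation on any $d+1$ consistent points, so nothing extra is needed.
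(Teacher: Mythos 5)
The paper offers no proof of this statement---it is imported verbatim as a black-box result of Welch and Berlekamp---so your argument can only be judged on its own terms. The decoding scheme you describe (a monic error-locator $E$ of degree $e$ vanishing exactly at the corrupted nodes, $N = Eq$ of degree at most $e+d$, the linear identity $N(x_i)=E(x_i)y_i$ at every point, Gaussian elimination, and uniqueness via the observation that $N_1E_2-N_2E_1$ has degree at most $2e+d$ yet vanishes at all $k$ distinct nodes) is the standard and correct proof of the Berlekamp--Welch theorem, and your handling of the search over $e$ and the final polynomial division is fine.

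The gap is in your very first line, where you derive the degree bound on which everything else rests. The chain $e \le k-\min(d+1,(k+d)/2) \le (k-d-1)/2$ is false: whenever $k\ge d+2$ the minimum equals $d+1$, so the stated promise only gives $e\le k-d-1$, which for large $k$ far exceeds $(k-d-1)/2$ and does not yield $2e+d<k$. In fact the theorem as printed (with $\min$) is not true: take $k=100$, $d=1$, and data agreeing with one line on two nodes and with a different line on the other ninety-eight; both lines satisfy the promise, so no algorithm can recover $q$. The hypothesis should read $\max(d+1,(k+d)/2)$ (and, to be safe at the integer boundary, ``more than $(k+d)/2$,'' as in Aaronson and Arkhipov's statement of the same result), which is precisely what gives $2e+d<k$ and makes your uniqueness argument work. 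Your closing remark about $k<d+1$ is also off: in that regime even error-free data cannot determine a degree-$d$ polynomial and there are no ``$d+1$ consistent points'' to interpolate on. In short, you have given the right proof of the right theorem, but you silently---and via an incorrect inequality---substituted the hypothesis the argument actually needs for the one the statement asserts.
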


We remark that if we choose $k = 100mK$, then for a good $\{H_j\}$ with high probability (by a Markov's inequality argument), the polynomial $\tilde{q} = q$. Therefore, $\tilde{q}(1) = q(1) = \przero{C'(1)}$. Since at least $2/3$ of $\{H_j\}$ are good, by repeating this procedure $O(1)$ times and applying a majority argument, we can compute $\przero{C'(1)}$ exactly. It only remains to show that $\przero{C'(1)}$ is a $2^{-\poly(n)}$ additive approximation to $\przero{C}$, a $\SHARPP$-hard quantity.

We can apply Fact \ref{fact:closeness-of-output} to argue that $\abs{\przero{C'(1)} - \przero{C}}$ is at most $2^{O(nm)}/{((K)!)^m}$. As we choose $K = \poly(n)$, this is at most $2^{-\poly(n)}$ for every desired polynomial.
\end{proof}

\subsection{Sampling implies average-case approximations in the polynomial hierarchy}
\label{subsec:samp-implies-count}

In this section, we explain why Conjecture \ref{missing-conjecture} implies quantum supremacy for RCS.
In particular, we show that such an efficient classical algorithm for RCS would have surprising complexity consequences. This section will be very similar to analogous results in earlier work (see e.g., \cite{aaronsonboson,feffermanfourier,bremner2016average}).

That is, we show that the following algorithm which we call an approximate sampler, is unlikely to exist:
\begin{definition}[Approximate sampler]
An approximate sampler is a classical probabilistic polynomial-time algorithm that takes as input a description of a quantum circuit $C$, as well as a parameter $\epsilon$ (specified in unary) and outputs a sample from a distribution $D_C'$ such that \[||D_C-D_C'||\leq \epsilon\] where $D_C$ is the outcome distribution of the circuit $C$ and the norm is total variation distance.
\end{definition}

Our main result will connect the existence of an approximate sampler to an algorithm which will estimate the probabilities of most Haar random circuits, in the following sense:

\begin{definition}[Average-case approximate solution] A polynomial-time algorithm $\mathcal{O}$ is an average-case approximate solution to a quantity $p(x)$ with respect to an input distribution $\mathcal{D}$ if:

 \[\Pr_{x\sim \Dd} \left[ \abs{\mathcal{O}(1^{1/\epsilon},1^{1/\delta},x)-p(x)} \leq \frac{\eps}{2^n}  \right] \geq 1 - \delta.\]

\end{definition}
In other words, an average-case approximate solution outputs a good estimate to the desired quantity for most random inputs but might fail to produce any such estimate for the remaining inputs.

More formally, the main theorem of this section, Theorem \ref{thm:samp-implies-count}, proves that the existence of an approximate sampler implies the existence of an average-case approximate solution for computing the $\przero{C}$ probability of a random circuit $C$ drawn from the Haar distribution.  This average-case approximate solution will run in probabilistic polynomial time with access to an $\NP$ oracle.  The main theoretical challenge in quantum supremacy is to give evidence that such an algorithm does not exist.  This would certainly be the case if the problem was $\SHARPP$-hard, or as hard as counting the number of solutions to a boolean formula.  Such a conjecture lies at the heart of all current supremacy proposals. More formally, this conjecture is:

\begin{customconj}{\ref{missing-conjecture}}
There exists a fixed architecture $\mathcal{A}$ so that computing an average-case approximate solution to $\przero{C}$ with respect to $\Hh_\Aa$ is $\SHARPP$-hard. 
\end{customconj}

We now show how Conjecture \ref{missing-conjecture} would rule out a classical approximate sampler for RCS, under well-believed assumptions. Specifically, assuming this conjecture is true, Theorem \ref{thm:samp-implies-count} tells us that an approximate sampler would give an algorithm for solving a $\SHARPP$-hard problem in $\BPP^\NP$.  Now, $\BPP^\NP$ is known to be in the third-level of the $\PH$ (see e.g., \cite{lautemann1983}).  In other words, $\BPP^\NP \subseteq \Sigma_3$.  On the other hand, a famous theorem of Toda tells us that all problems solvable in the $\PH$ can be solved with the ability to solve $\SHARPP$-hard problems.  That is, $\PH\subseteq\Poly^{\SHARPP}$ \cite{Toda}.  
Putting everything together, we have that an approximate sampler would imply that $\PH\subseteq\Sigma_3$, a collapse of the $\PH$ to the third-level, a statement that is widely conjectured to be false (e.g., \cite{Karp:1980:CNU:800141.804678,Boppana:1987:CSI:31193.31202}).

Finally, we prove Theorem \ref{thm:samp-implies-count}. The proof utilizes a classic theorem by Stockmeyer \cite{stockmeyer85}, which we state here for convenience.

 \begin{theorem}[Stockmeyer \cite{stockmeyer85}]
 Given as input a function $\bitsfn f n m$ and $y \in \bits^m$ there is a procedure that runs in randomized time $\poly(n,1/\eps)$ with access to a $\NP^f$ oracle that outputs an $\alpha$ such that
 \begin{equation*}
 (1 - \eps) p \leq \alpha \leq (1 + \eps) p \text{\ for \ } p = \Pr_{x \sim \Uu(\bits^n)} [f(x) = y].
 \end{equation*}
 \label{theorem-stockmeyer}
 \end{theorem}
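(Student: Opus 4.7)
The plan is approximate counting via hashing. Let $S = \{x \in \bits^n : f(x) = y\}$, so $p = |S|/2^n$; the goal is to output $\alpha$ multiplicatively close to $p$ using the $\NP^f$ oracle. First I would fix a pairwise independent hash family $\mathcal{H}_{n,k}$ of functions $h : \bits^n \to \bits^k$, for example the affine family $h(x) = Ax + b$ over $\FF_2$. For $h$ drawn uniformly from $\mathcal{H}_{n,k}$, define $N_k(h) \defeq |S \cap h^{-1}(0^k)|$; by pairwise independence, $\mathbb{E}[N_k(h)] = |S|/2^k$ and $\mathrm{Var}(N_k(h)) \leq |S|/2^k$, so Chebyshev yields
\[
\Pr_h\!\left[\,\left|N_k(h) - |S|/2^k\right| > \delta \cdot |S|/2^k \,\right] \;\leq\; \frac{2^k}{\delta^2 \, |S|}.
\]
Hence whenever $|S|/2^k = \Omega(1/\eps^2)$, the rescaled estimator $2^{k-n}\,N_k(h)$ lies within a $(1\pm\eps)$ multiplicative factor of $p$ with constant probability.

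The second step is to compute $N_k(h)$ exactly using the $\NP^f$ oracle. The predicate ``given $(h,k,T)$, is $N_k(h) \geq T$?'' lies in $\NP^f$: nondeterministically guess $T$ distinct strings $x_1,\ldots,x_T \in \bits^n$ and verify $f(x_i)=y$ and $h(x_i)=0^k$ using oracle calls to $f$. Binary search over $T \in \{0,1,\ldots,2^n\}$ then extracts $N_k(h)$ exactly in $O(n)$ oracle queries. Since the relevant values of $T$ are $\poly(n,1/\eps)$, the whole subroutine runs in time $\poly(n,1/\eps)$.

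The third step chooses $k$. The algorithm tries $k = 0, 1, \ldots, n$ in order and computes $N_k(h)$, halting at the largest $k$ for which $N_k(h) \geq T^\star$, where $T^\star = C/\eps^2$ for a sufficiently large constant $C$. For such $k$ one expects $|S|/2^k$ to lie roughly in $[T^\star, 2T^\star]$, placing us in the Chebyshev regime where $N_k(h) = (1\pm\eps)\,|S|/2^k$ with constant probability, so the output $\alpha \defeq 2^{k-n}\,N_k(h)$ is $(1\pm\eps)$-close to $p$. Edge cases are clean: if $|S|$ is already $O(1/\eps^2)$, then $k=0$ is selected and $N_0 = |S|$ is computed exactly by binary search; if $|S|=0$, the procedure simply returns $\alpha = 0$. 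To boost confidence, repeat with $t = \poly(1/\eps,\log(1/\delta))$ independent hashes and output the median; a standard Chernoff bound over the independent Chebyshev trials drives the failure probability below any desired inverse polynomial while keeping the total number of $\NP^f$ queries $\poly(n,1/\eps)$.

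The main technical obstacle is the delicate interplay between the concentration threshold $T^\star$, the greedy search over $k$, and the amplification. One must show that, with high probability over $h$, the $k$ chosen by the greedy rule lands in a dyadic shell where $|S|/2^k \geq T^\star/\text{const}$, so that Chebyshev is genuinely in force and the estimate is multiplicatively tight; a naive choice of $T^\star$ gives only additive-in-$\eps$ control and has to be tightened by rescaling $T^\star \leftarrow T^\star/\eps^2$. Lining these constants up carefully --- so that the per-trial success probability is a constant strictly above $1/2$ and the median then produces $(1\pm\eps)$ accuracy at cost $\poly(n,1/\eps)$ --- is where the bulk of the care is required.
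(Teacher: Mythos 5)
The paper does not prove this statement; it is imported verbatim as a known black box with a citation to Stockmeyer, so there is no in-paper proof to compare against. Your argument is the standard one for approximate counting relative to an $\NP$ oracle --- pairwise-independent hashing, Chebyshev in the regime $|S|/2^k = \Omega(1/\eps^2)$, exact recovery of the surviving count via a threshold predicate in $\NP^f$ plus binary search, and median amplification --- and it is essentially correct. Two small points of hygiene: for small $k$ the count $N_k(h)$ can be superpolynomial, so during the search over $k$ you should only query the poly-witness predicate ``$N_k(h) \geq T^\star$'' rather than compute $N_k(h)$ exactly (your own remark that the relevant $T$ are $\poly(n,1/\eps)$ implicitly requires this), and the selection rule should be stated as finding the largest $k$ passing the threshold (e.g.\ by descending from $k=n$) with a union bound over the $O(n)$ candidate shells to decouple the choice of $k$ from the concentration event, which you correctly flag as the delicate step.
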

In the context of this work, the primary consequence of Stockmeyer's theorem is that we can use an $\NP$ oracle to get a multiplicative estimate to the probability of any outcome of an approximate sampler, by counting the fraction of random strings that map to this outcome.   Using this idea we prove:
\begin{theorem}
If there exists an approximate sampler $\mathcal{S}$ with respect to circuits from a fixed architecture $\Aa$, there also exists an average-case approximate solution in $\BPP^{{\NP}^{\mathcal{S}}}$ for computing the $\przero{C}$ probability for a random circuit $C$ drawn from $\Hh_\Aa$.
\label{thm:samp-implies-count}
\end{theorem}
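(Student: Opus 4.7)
The plan is to combine Stockmeyer's approximate counting with a ``hiding'' property of the Haar distribution on circuits. I would fix the sampler $\mathcal{S}(C,r)$, invoked with total variation error parameter $\epsilon_0$ (chosen below as an inverse polynomial), producing for each $C$ an output distribution $D'_C$ with $\|D_C - D'_C\| \le \epsilon_0$. Write $p'_0(C) \defeq D'_C(0^n)$. Applying Theorem~\ref{theorem-stockmeyer} to the function $r\mapsto \mathcal{S}(C,r)$ with fixed outcome $0^n$ yields, in time $\poly(n,1/\epsilon_1)$ and with access to an $\NP^\mathcal{S}$ oracle, a multiplicative estimator $q(C)$ satisfying $(1-\epsilon_1)\,p'_0(C) \le q(C) \le (1+\epsilon_1)\,p'_0(C)$. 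The whole procedure thus lives in $\BPP^{\NP^\mathcal{S}}$.

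The next step is to control $|q(C)-\przero{C}|$ on average over $C \sim \Hh_\Aa$, starting from the triangle inequality
\[|q(C)-\przero{C}| \le \epsilon_1\, p'_0(C) + |p'_0(C) - \przero{C}|.\]
For this I would invoke \emph{hiding}: for architectures $\Aa$ whose final layer of Haar-random gates covers all output qubits, one can absorb output Pauli-$X$ flips into that layer by left-invariance of the Haar measure, so the joint distribution of $(C, y)$ with $C\sim \Hh_\Aa$ and $y$ uniform on $\bits^n$ equals that of $(C', 0^n)$ with $C'\sim \Hh_\Aa$. Two consequences follow: (i) $\mathbb{E}_{C}[\przero{C}] = 1/2^n$, and (ii) $\mathbb{E}_{C}[|p'_0(C) - \przero{C}|] = \mathbb{E}_{C,y}[|D'_C(y) - D_C(y)|] \le 2\epsilon_0/2^n$, using that the $\ell_1$ distance between $D_C$ and $D'_C$ is at most $2\epsilon_0$.

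Two applications of Markov's inequality, each with failure probability $\delta/2$, then give that except on a $\delta$-fraction of $C \sim \Hh_\Aa$, both $\przero{C} \le 2/(\delta\, 2^n)$ and $|p'_0(C)-\przero{C}| \le 4\epsilon_0/(\delta\, 2^n)$ hold simultaneously. Plugging back into the triangle inequality yields
\[|q(C)-\przero{C}| \le \frac{\epsilon_1(2+4\epsilon_0) + 4\epsilon_0}{\delta\, 2^n}.\]
Choosing $\epsilon_0$ and $\epsilon_1$ as sufficiently small inverse polynomials in $1/\epsilon$ and $1/\delta$ (both still polynomially bounded, keeping the Stockmeyer invocation polynomial-time), this is at most $\epsilon/2^n$, which exactly matches the definition of an average-case approximate solution.

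The main obstacle is the hiding step. Its validity depends on the architecture $\Aa$: one needs the final layer of the circuit to be distributed so that left-multiplication by any Pauli-$X$ string leaves it distributed identically. For the standard brickwork-style architectures envisioned for RCS this is immediate because the last layer consists of Haar-random two-qubit gates covering every qubit, but any attempt to extend the result to more constrained architectures (say, with a sparser final layer, or gates drawn from a discrete set) requires a more delicate argument and may incur small losses that must be absorbed into $\epsilon_0$. A secondary subtlety is that Stockmeyer's guarantee is multiplicative while the average-case approximate solution requires additive inverse-exponential precision; this is precisely why we need the hiding-based bound $\mathbb{E}_C[\przero{C}]=1/2^n$ to turn multiplicative error into additive error of the correct magnitude. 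Once hiding is in hand, the rest is parameter bookkeeping that closely parallels the corresponding arguments in Aaronson--Arkhipov and Bremner--Jozsa--Shepherd.
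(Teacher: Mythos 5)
Your overall strategy — Stockmeyer on the sampler's output, two applications of Markov, and a hiding argument to transport the bounds — is the right shape, and it closely parallels the paper's proof. But there is a genuine gap in consequence (ii). You define the algorithm so that it queries the sampler on the input circuit $C$ itself and runs Stockmeyer on $p'_0(C) = D'_C(0^n)$, then claim
\[
\mathbb{E}_{C\sim\Hh}\bigl[\,|p'_0(C)-\przero{C}|\,\bigr] \;=\; \mathbb{E}_{C,y}\bigl[\,|D'_C(y)-D_C(y)|\,\bigr] .
\]
Hiding gives the \emph{left} side equals $\mathbb{E}_{C',y}\bigl[\,|D'_{C'_y}(0)-D_{C'_y}(0)|\,\bigr]$, and indeed $D_{C'_y}(0)=D_{C'}(y)$; but nothing forces $D'_{C'_y}(0)=D'_{C'}(y)$. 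The sampler is only promised to be close in total variation to the ideal distribution circuit-by-circuit; it is not promised to be equivariant under appending Pauli-$X$ gates, and the identity you need fails in general. Concretely, an adversarial sampler that always sets $D'_C(0^n)=0$ and redistributes the missing mass still satisfies the TV guarantee for typical $C$ (since $\przero{C}\approx 1/2^n \ll \epsilon_0$), yet $\mathbb{E}_C[|p'_0(C)-\przero{C}|]\approx 1/2^n$, not $O(\epsilon_0/2^n)$. With your definition of the algorithm, the two Markov bounds do not combine to the needed conclusion.

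The fix — and what the paper's argument actually does, via its intermediate Lemma~\ref{lemma:nohiding} plus hiding — is to randomize the \emph{target output} rather than fixing it at $0^n$. On input $C'\sim\Hh_\Aa$, pick $y$ uniformly, form $C=C'_{-y}$ by stripping the corresponding $X$-flips, query the sampler on $C$, and run Stockmeyer on $D'_C(y)$. For a fixed $C$, $\mathbb{E}_y[|D'_C(y)-D_C(y)|]\le 2\epsilon_0/2^n$ follows directly from the $\ell_1$ bound; Markov over $y$ then handles the deviation; and hiding identifies the joint law of $(C,y)$ with a Haar $C'=C_y$, so the failure event transports to ``random circuit, fixed output $0^n$.'' The remaining step you already have (point (i) and the extra Markov to turn Stockmeyer's multiplicative error into additive error of order $\epsilon/2^n$). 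Note also that your slogan ``the joint distribution of $(C,y)$ equals that of $(C',0^n)$'' needs to be read as a bijection of circuit--target pairs, not an equidistribution of the second coordinate; what is actually used is that $C_y$ for Haar $C$ and uniform $y$ is itself Haar, together with $D_{C_y}(0)=D_C(y)$.
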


\begin{proof}
We start by proving a related statement, which says that if we can sample approximately from the outcome distribution of any quantum circuit, we can approximate most of the output probabilities of all circuits $C$.  This statement, unlike the  Theorem \ref{thm:samp-implies-count}, is architecture-agnostic.  
\begin{lemma}\label{lemma:nohiding}
	If there exists an approximate sampler $\mathcal{S}$ then for any quantum circuit $C$, there exists an average-case approximate solution in $\BPP^{{\NP}^{\mathcal{S}}}$ for computing the $|\bra{y}C\ket{0}|^2$ probability of a randomly chosen outcome $y \in \bits^n$.
\end{lemma}
\begin{proof}
First fix parameters $\delta, \eps > 0$. Then for any quantum circuit $C$, $\mathcal{S}(C, 1^{1/\eta})$ samples from a distribution $\eta$-close to the output distribution $p$ of $C$.  We denote this approximate outcome distribution by $q$. By Theorem \ref{theorem-stockmeyer}, there exists an algorithm $\Oo \in \BPP^{\NP^\mathcal{S}}$ such that
\begin{equation*}
(1- \gamma) q_y \leq \abs{ \Oo'(C,y,1^{1/\eps},1^{1/\gamma}) } \leq (1 + \gamma) q_y.
\end{equation*}
Let $\tilde q_y = \Oo(C,y,1^{1/\eta},1^{1/\gamma})$ for $\gamma$ to be set later. Since $q$ is a probability distribution, $\mathbb{E}(q_y) = 2^{-n}$. By Markov's inequality,
\begin{equation*}
\Pr_y \left[ q_y \geq \frac{k_1}{2^n} \right] \leq \frac{1}{k_1}; \qquad \Pr_y \left[ \abs{q_y - \tilde {q_y}} \geq \frac{\gamma k_1}{2^n} \right] \leq \frac{1}{k_1}.
\end{equation*}
Secondly, let $\Delta_y = \abs{p_y - q_y}$. By assumption, $\sum_y \Delta_y = 2 \eta$ so, therefore, $\mathbb{E}(\Delta_y) = 2 \eta / 2^n$. Another Markov's inequality gives
\begin{equation*}
\Pr_y \left[\Delta_y \geq \frac{2 k_2 \eta}{2^n} \right] \leq \frac{1}{k_2}.
\end{equation*}
With a union bound and a triangle inequality argument,
\begin{equation*}
\Pr_y \left[ \abs{p_y - \tilde q_y} \geq \frac{\gamma k_1 + 2 k_2 \eta}{2^n} \right] \leq \frac{1}{k_1} + \frac{1}{k_2}
\end{equation*}
Choose $k_1 = k_2 = 2/\delta, \gamma = (\eps \delta)/4, \eta = \gamma / 2$. Then,
\begin{equation*}
\Pr_y \left[ \abs{p_y - \tilde q_y} \geq \frac{\eps}{2^n} \right] \leq \delta.
\end{equation*}
Therefore, for any circuit $C$, the algorithm $\mathcal{O}$ is an approximate average-case solution with respect to the uniform distribution over outcomes, as desired.
\end{proof}

Now we use the shared architecture constraint in the theorem statement to enable a so-called \emph{hiding} argument. Hiding shows that if one can approximate the $\abs{\mel{y}{C}{0}}^2$ probability for a random $y \in \bits^n$, the one can also approximate $\przero{C}$ for a random $C$. This latter step will be crucial to our main result.  In particular, both the anti-concentration property and our proof of average-case hardness of estimating circuit probabilities relies on considering a fixed output probability (see Appendix \ref{subsec:proof-worst-to-avg} and \ref{sec:anticonc}).  

To prove this, we rely on a specific property of $\Hh_\Aa$.  This hiding property is that for any $C \sim \Hh_\Aa$, and uniformly random $y \in \bits^{n}$, $C_y \sim \Hh_\Aa$ where $C_y$ is the circuit such that $\bra{z}C_y\ket{0} = \bra{z\oplus y}C\ket{0}$.  In other words, the distribution over circuits needs to closed under appending Pauli $X$ gates to a random subset of output qubits. 

Lemma \ref{lemma:nohiding} tells us that for any circuit $C$, an approximate sampler gives us the ability to estimate most output probabilities $\bra{y}C\ket{0}$.  If we instead restrict ourselves to Haar random circuits over the architecture $\mathcal{A}$, we can think of this same algorithm $\mathcal{O}$ as giving an average-case approximate solution with respect to the distribution generated by first choosing $C$ from the Haar distribution and then appending $X$ gates to a uniformly chosen subset of the output qubits, specified by a string $y\in\{0,1\}^n$, since $\bra{y}C\ket{0}=\bra{0}C_y\ket{0}$.  Using the hiding property this is equivalent to an average-case approximate solution with respect to circuits $C$ drawn from the Haar random distribution over $\mathcal{A}$, as stated in Theorem \ref{thm:samp-implies-count}.
\end{proof}

\subsection{Connecting with worst-case hardness and anti-concentration}
\label{sec:anticonc}

Prior to this subsection, all of our results have been architecture agnostic-- our worst-to-average case reduction in Appendix \ref{subsec:proof-worst-to-avg} aims to reduce the presumed worst-case hardness of computing output probabilities of quantum circuits over a fixed architecture $\mathcal{A}$ to computing them on average over $\Hh_\Aa$.

Of course, for these results to be relevant to quantum supremacy, we need to establish that for the architectures $\Aa$ used in supremacy experiments, computing worst-case output probabilities is $\SHARPP$-hard.
Then our worst-to-average-case reduction shows that computing average case probabilities for these experiments over $\Hh_\Aa$ is $\#\mathsf{P}$-hard -- which is precisely what is necessary for the supremacy arguments of Appendix \ref{subsec:proof-worst-to-avg} to hold.
In this section, we will show that this requirement on $\Aa$ is quite mild.
In particular, we will show that a candidate instantiation of RCS which is known to anti-concentrate -- namely random quantum circuits on a 2D grid of depth $O(n)$ -- easily satisfy this property.
Therefore it is possible to have a single candidate RCS experiment which has both average-case $\#\mathsf{P}$-hardness as well as anti-concentration.

Such worst-case hardness can be established via the   arguments of Bremner, Jozsa and Shepherd \cite{BJS2010}.
Although we will not summarize these standard arguments here, the key technical ingredient is demonstrating that quantum computations over this fixed architecture are universal. This will imply that the power of the corresponding complexity class supplemented with the ability to do post-selected measurements is equal in power to $\PostBQP=\PP$ by a result of Aaronson \cite{Aar05}. That is, to show our worst-case hardness result it suffices to  show that the class of problems solvable by circuits over a fixed architecture is equal to $\BQP$.  
This can be established by standard results from measurement-based quantum computation involving universal resource states \cite{raussendorf2001one,raussendorf2003measurement,briegel2009measurement}.
Roughly speaking, these results allow us to prepare a fixed state on a 2D grid and simulate any quantum circuit by performing a sequence of adaptive one-qubit measurements on this state.
Combining these results immediately implies that if an architecture $\Aa$ is capable of generating one of these universal resource states, then $\Aa$ contains $\#\mathsf{P}$-hard instances -- because one could simply post-select the measurement outcomes such that no adaptivity is required. 

To be more formal, let us  define some notation. Let $\Aa\subseteq \Aa'$ if the gates in $\Aa$ are a subset of those in $\Aa'$.
Then if a a circuit $C$ is realizable in $\Aa$, then it is also realizable in $\Aa'$ - simply by setting those gates not in $\Aa$ to the identity\footnote{One can also expand this definition to consider a one-qubit gate to be a subset of a two-qubit gate - as one can always set the two-qubit gate to be the identity tensor a one qubit gate.}.
Consider  the ``brickwork'' state defined by Broadbent, Fitzsimons and Kashefi \cite{broadbent2009universal}.
The brickwork state $\ket{\Psi_{\mathrm{brick}}}$ is a universal resource state for measurement-based quantum computation, which has nice properties. In particular it can be prepared by a constant-depth quantum circuit $C_{\mathrm{brick}}$ on a 2D grid, where gates only act on nearest-neighbor qubits.
Let $\Aa_{\mathrm{brick}}$ be the architecture of $C_{\mathrm{brick}}$, adding on space for one-qubit gates on every output qubit.
Then $\Aa_{\mathrm{brick}}$ is universal for quantum computation under post-selection by the above arguments. 
Therefore these prior results immediately yield the following Lemma:
\begin{lemma}
\label{lem:brickwork}
For any architecture $\Aa$ such that $\Aa_{\mathrm{brick}} \subseteq \Aa$, it is $\#\mathsf{P}$-hard to compute worst case probabilities in $\Aa$.
\end{lemma}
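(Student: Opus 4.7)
My plan is to prove Lemma \ref{lem:brickwork} via a two-stage reduction: first reduce arbitrary $\BQP$ computations to postselected circuits in $\Aa_{\mathrm{brick}}$ via measurement-based quantum computation on the brickwork state, then invoke Aaronson's $\PostBQP = \PP$ theorem \cite{Aar05} to conclude $\SHARPP$-hardness of sufficiently precise additive approximations to $\przero{C}$.

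The first step is the standard off-the-shelf observation: if an architecture $\Aa'$ is universal for $\BQP$ under postselection, then computing $\przero{C}$ to $2^{-\poly(n)}$ additive precision in the worst case is $\SHARPP$-hard. Given such a postselection-universal family, one encodes a $\SHARPP$-complete counting problem as the postselected output probability of a circuit in $\Aa'$; the ratio $\przero{C}/\przero{C_{\mathrm{post}}}$ expressing the postselected probability is $\SHARPP$-hard to estimate by \cite{Aar05}, and an inverse-exponentially precise estimator for unconditional output probabilities would immediately yield such an estimate. This is precisely the route exploited in \cite{BJS2010} and its descendants.

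The content of the lemma therefore reduces to showing $\Aa_{\mathrm{brick}}$ is postselection-universal. The constant-depth circuit $C_{\mathrm{brick}}$ prepares $\ket{\Psi_{\mathrm{brick}}}$ on a 2D grid using only nearest-neighbor gates. To simulate an arbitrary $\BQP$ circuit $V$, the MBQC protocol of \cite{broadbent2009universal} prescribes, for each output qubit $i$, a one-qubit measurement in a basis $B_i(s_{<i})$ depending on earlier outcomes $s_{<i}$. I would realize each such measurement using the per-output-qubit one-qubit gate slot of $\Aa_{\mathrm{brick}}$: apply a rotation $U_i$ such that $U_i^\dagger\ket{0}$ agrees with the desired measurement vector, and then measure in the computational basis. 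Since $\Aa_{\mathrm{brick}}$ affords no classical feedback, the adaptivity is eliminated by postselecting every qubit onto a fixed branch (say $\ket{0}$), which pins down each $U_i$ as a non-adaptive gate chosen in advance. Standard analysis of MBQC under postselection then recovers $V$'s amplitude from $\przero{C}$ of a circuit $C$ lying entirely in $\Aa_{\mathrm{brick}}$.

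Finally, the inclusion $\Aa_{\mathrm{brick}} \subseteq \Aa$ transports the conclusion to $\Aa$: any circuit in $\Aa_{\mathrm{brick}}$ is realizable in $\Aa$ by assigning the identity to every gate slot not present in $\Aa_{\mathrm{brick}}$. The main place requiring care is step two: verifying that the MBQC-with-postselection encoding fits strictly within the gate slots of $\Aa_{\mathrm{brick}}$ (so that, in particular, the preparation circuit $C_{\mathrm{brick}}$ is not disturbed and the only freedom used at the output layer is the one-qubit-gate slot the architecture already provides). Once this architectural bookkeeping is done, the remainder is routine: the postselection gadget introduces an exponentially small but well-defined success probability that is perfectly compatible with $2^{-\poly(n)}$ additive-precision estimates, and the $\PostBQP = \PP$ step closes the argument.
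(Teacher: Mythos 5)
Your proposal follows the same route the paper (implicitly) takes: establish postselection-universality of $\Aa_{\mathrm{brick}}$ by simulating arbitrary $\BQP$ circuits via MBQC on the brickwork state, eliminate adaptivity by postselecting onto fixed measurement outcomes, invoke $\PostBQP=\PP$ \cite{Aar05} together with the \cite{BJS2010} argument to get worst-case $\SHARPP$-hardness, and then transport to $\Aa$ via $\Aa_{\mathrm{brick}} \subseteq \Aa$. The paper presents this as an appeal to prior work rather than a detailed proof, so your elaboration of the per-output-qubit gate bookkeeping is consistent with, and slightly more explicit than, what the paper sketches.
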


Note that the condition required to invoke Lemma \ref{lem:brickwork} is extremely mild. It simply says that the architecture must contain a simple constant-depth nearest-neighbor circuit on a 2D grid as a subgraph. 
We now show that the mildness of this condition allow us to easily connect worst-case hardness to anti-concentration.

Let us first define anti-concentration and state why it is important in the context of quantum supremacy.
Broadly speaking, anti-concentration is a statement about the distribution of probabilities. It states that \emph{most} output probabilities are reasonably large. 
\begin{definition}[Anti-concentration]
For a fixed architecture $\Aa$, we say that RCS anti-concentrates on $\Aa$, if there exists constants $\kappa,\gamma > 0$ so that:
\[
\Pr_{C \sim \Hh_\Aa} \left[ \przero{C} \geq \frac{1}{\kappa 2^n} \right] \geq 1-\gamma.
\]
\end{definition}

Crucially, this anti-concentration property allows us to reduce the hardness of average-case approximate solutions (which, by definition, approximate the desired circuit probability \emph{additively}) to an average-case solution that approximates the solution \emph{multiplicatively}.  As such, we can at least ensure that these approximations are non-trivial, that is the signal is not lost to the noise.  More formally, 
\begin{lemma}
For a fixed architecture $\Aa$ for which RCS anti-concentrates, if there exists an algorithm $\Oo$ that estimates $\przero{C}$ to additive error $\pm \epsilon/2^n$ for a $1-\delta$ fraction of $C \sim \Hh_\Aa$, then $\Oo'$ also can be used to estimate $\przero{C}$ to multiplicative error $\eps \cdot \kappa$ for a $1 - \delta - \gamma$ fraction of $C \sim \Hh_\Aa$.
\end{lemma}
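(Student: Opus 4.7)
The plan is to prove this lemma via a direct union bound combined with the anti-concentration inequality. The intuition is that the additive error guarantee $\epsilon/2^n$ only fails to be a meaningful multiplicative approximation when $\przero{C}$ itself is on the order of $1/2^n$ or smaller. Anti-concentration rules out this case for all but a $\gamma$ fraction of circuits, so on the intersection of ``$\Oo$ succeeds'' and ``$C$ is anti-concentrated'' the additive bound translates directly into a multiplicative one.

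Concretely, first I would define two ``bad'' events over the random draw $C \sim \Hh_\Aa$: (i) $E_1$, the event that $\Oo$ fails to output an additive $\epsilon/2^n$ approximation to $\przero{C}$, which by hypothesis has probability at most $\delta$; and (ii) $E_2$, the event that $\przero{C} < 1/(\kappa 2^n)$, which by the anti-concentration assumption has probability at most $\gamma$. A union bound then gives $\Pr[E_1 \cup E_2] \leq \delta + \gamma$, so with probability at least $1 - \delta - \gamma$ the circuit $C$ avoids both bad events simultaneously.

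Next, for any such ``good'' circuit, I would convert the additive guarantee into a multiplicative one. On this event we have both $|\Oo(C) - \przero{C}| \leq \epsilon/2^n$ and $\przero{C} \geq 1/(\kappa 2^n)$. The second inequality implies $1/2^n \leq \kappa \cdot \przero{C}$, so substituting into the first yields
\[
|\Oo(C) - \przero{C}| \;\leq\; \frac{\epsilon}{2^n} \;\leq\; \epsilon \kappa \cdot \przero{C},
\]
which is precisely a multiplicative $\epsilon\kappa$ approximation. Hence the algorithm $\Oo' \defeq \Oo$ satisfies the desired guarantee on a $(1-\delta-\gamma)$-fraction of circuits drawn from $\Hh_\Aa$.

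There is no real obstacle here; the entire content of the lemma is the observation that anti-concentration is exactly the condition needed to pass from additive to multiplicative error. The only minor subtlety to keep in mind is that the two failure probabilities (the algorithm's failure probability $\delta$ and the anti-concentration failure probability $\gamma$) are over the \emph{same} random draw of $C$, so the union bound applies cleanly without any independence assumption.
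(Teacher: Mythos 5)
Your proof is correct and follows essentially the same approach as the paper: both use a union bound over the event that $\Oo$ fails additively and the event that $\przero{C}$ is small, and then observe that outside these events the additive bound $\epsilon/2^n$ is dominated by $\epsilon\kappa\cdot\przero{C}$. The paper phrases the union bound contrapositively (bounding the probability of the multiplicative error event directly) while you bound the complementary good event, but the content is identical.
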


\begin{proof}
A rephrasing of the additive error assumption is $ \Pr_{C \in \Hh} \left[ \abs{\Oo(C) - \przero{C}} > \frac{\eps}{2^n} \right] \leq \delta$. We apply a union bound to argue that
\begin{align*}
\Pr_{C \in \Hh} \left[ \abs{\Oo(C) - \przero{C}} > \eps\kappa \przero{C} \right] &\leq \Pr_{C \in \Hh} \left[ \abs{\Oo(C) - \przero{C}} > \frac{\eps}{2^n} \right] + \Pr_{C \in \Hh} \left[ \frac{\eps}{2^n} > \eps \kappa \przero{C} \right] \\
&\leq \delta + \gamma.
\end{align*}
\end{proof}

Anti-concentration is known for random quantum circuits of depth $O(n)$.
It is possible to show that this instantiation of RCS obeys the conditions of Lemma \ref{lem:brickwork}, and hence can exhibit both average-case hardness and anti-concentration simultaneously.
More specifically, suppose that at each step one picks a random pair of nearest-neighbor qubits on a line, and applies a Haar random gate between those qubits, until the total depth of the circuit is $O(n)$.
Prior work has established that such circuits are approximate quantum two-designs, i.e. they approximate the first two moments of the Haar measure \cite{brandao2013exponential,brandao2016local}.
This, combined with the fact that unitary two-designs are known to anti-concentrate (which was noted independently in multiple works \cite{hangleiter2017anti,boulandcccs,bremner2017upcoming}), implies that random circuits of depth $O(n)$ anti-concentrate.
These results immediately generalize to random circuits of depth $O(n)$ on a 2D grid.
Note one can easily show that with probability $1-o(1/\poly(n))$ over the choice of a random circuit in this model, the architecture of the circuit obeys Lemma \ref{lem:brickwork}.
Hence, computing average-case probabilities over this random circuit model is $\#\mathsf{P}$-hard\footnote{Although here we are discussing average-case hardness over a random choice of architecture, this result easily follows from our reduction for a single architecture, since w.h.p. the architecture drawn is hard on average.}.
Therefore, random circuits of depth $O(n)$ on a 2D grid obtain both average-case hardness and anti-concentration.
We note that it is conjectured that random circuits of depth $O(n^{1/2})$ on a 2D grid anti-concentrate as well \cite{boixo2016characterizing}. If this conjecture is true then such circuits would also exhibit both anti-concentration and average-case hardness, as we only require constant depth to satisfy Lemma \ref{lem:brickwork}.

\section{Verification of Random Circuit Sampling}

\subsection{Technical preliminaries}

In this section, if unspecified, a probability distribution will be over strings $x\in\{0,1\}^n$. The size of the domain will be denoted $N=2^n$. The phrase ``with high probability'' will mean with probability $1-o(1)$.

\begin{definition}
Given two probability distributions $D,D'$, the cross-entropy, cross-entropy difference, and total variation distance between $D$ and $D'$, denoted $\ce(D,D')$, $\ced(D,D')$, and $|D-D'|$, respectively, are given by
\begin{align*}
\ce(D,D') &= \displaystyle\sum_{x\in\{0,1\}^n} D(x)\log\left(\frac{1}{D'(x)}\right), \\
\ced(D,D') &= \displaystyle\sum_{x\in\{0,1\}^n} \left(\frac{1}{N} -D(x)\right)\log\left(\frac{1}{D'(x)}\right), \\
|D-D'| &= \frac{1}{2}\displaystyle\sum_{x\in\{0,1\}^n} \left|D(x)-D'(x)\right|.
\end{align*}
\end{definition}
The cross-entropy difference is simply equal to $\ce(\mathcal{U},D') - \ce(D,D')$, where $\mathcal{U}$ is the uniform distribution.
One particular probability distribution which will play an important role in this discussion is the Porter-Thomas distribution. It approximately describes the probability distributions output by Haar random quantum circuits (see e.g., \cite{porterthomas,boixo2016characterizing}).

\begin{definition}
The Porter-Thomas distribution, $PT$, is the probability density function over $[0,\infty)$ defined as
\[f_{PT}(q) = Ne^{-qN}.\]
\label{def:pt}
\end{definition}

Let $\ket{\Psi_U} = U\ket{0^n}$ be the state obtained by applying the unitary $U$ to the all 0's input state. Let $p_U(x)$ denote the probability of obtaining $x$ upon measuring $\ket{\Psi_U}$, i.e. 
\[p_U(x) = \left|\braket{x}{\Psi_U}\right|^2.\]
Then, we have that for any $x$ the distribution of $p_U(x)$ over Haar random $U$ is well-approximated by the Porter-Thomas distribution.  For fixed outcome $x$, we will call this distribution over the choice of Haar random $U$, $P(x)$.

\begin{fact}[Thm. 35 of \cite{aaronsonboson}]
  For any fixed outcome $x$ and $c>0$, $|P(x)-PT| \leq O(1/N^c)$.
\end{fact}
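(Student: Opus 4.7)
The plan is to reduce $P(x)$ to an explicit one-parameter law and then estimate its total variation distance from $PT$ by a direct integral bound.

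First I would use the left-invariance of the Haar measure to observe that for Haar-random $U$ the distribution of $p_U(x) = |\bra{x}U\ket{0^n}|^2$ does not depend on $x$; equivalently, $U\ket{0^n}$ is itself a Haar-uniform unit vector in $\CC^N$, and $p_U(x)$ is the squared modulus of one of its coordinates. Next I would invoke the standard Gaussian representation of the uniform measure on the sphere: sample $z_1, \dots, z_N$ as i.i.d.\ standard complex Gaussians and set $v = z/\|z\|$. Setting $X = |z_1|^2$ and $Y = \sum_{j\ge 2}|z_j|^2$, the classical identity that $X/(X+Y) \sim \mathrm{Beta}(\alpha,\beta)$ for independent $X \sim \mathrm{Gamma}(\alpha,1)$, $Y \sim \mathrm{Gamma}(\beta,1)$ identifies $P(x)$ as $\mathrm{Beta}(1, N-1)$, with density
\[
g(q) = (N-1)(1-q)^{N-2} \quad \text{on } [0,1].
\]

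The remaining task is to bound $\tfrac{1}{2}\int_0^\infty |g(q) - f_{PT}(q)|\,dq$. I would split the integration range into a bulk $q \le R$ and a tail $q > R$ for a threshold $R$ on the order of $(\log N)/\sqrt{N}$, chosen as a function of the desired polynomial exponent $c$. In the tail, both densities are super-polynomially small: $f_{PT}(q) \le e^{-\Omega(\sqrt{N}\log N)}$, and $(1-q)^{N-2} \le e^{-(N-2)q}$ gives the analogous bound on $g$ (with $g$ vanishing outside $[0,1]$). In the bulk, Taylor-expanding $(N-2)\log(1-q) = -(N-2)q - (N-2)q^2/2 - O(Nq^3)$ yields
\[
\frac{g(q)}{f_{PT}(q)} = \Bigl(1 - \tfrac{1}{N}\Bigr)\exp\bigl(2q - (N-2)q^2/2 + O(Nq^3)\bigr),
\]
so the multiplicative ratio stays within $1 + O(1/N + q + Nq^2)$ throughout the bulk. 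Integrating this pointwise bound against $f_{PT}$ then yields the claimed inverse polynomial decay of $|P(x) - PT|$.

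The step I expect to be the main obstacle is the quantitative book-keeping in the bulk: the Taylor correction $Nq^2$ is not uniformly small over $[0,R]$, and to obtain $O(1/N^c)$ for arbitrary $c > 0$ one must balance the growth of this correction against the exponential decay of $f_{PT}$. Concretely, $R$ must be chosen slowly enough that the multiplicative error stays controlled on $[0,R]$, yet large enough that the tail contribution is still below $1/N^c$. Once the decomposition is set up, the estimate is essentially standard asymptotic analysis, but verifying these two competing requirements on $R$ simultaneously is the delicate part.
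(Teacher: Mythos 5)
The paper itself does not prove this Fact; it is quoted as a citation to Theorem 35 of Aaronson--Arkhipov, so there is no internal proof for me to compare your proposal against, only the cited external result. Evaluating your argument on its own terms: the reduction to an explicit one-parameter law is correct and is the right idea. Left-invariance of Haar measure reduces $P(x)$ to the distribution of one coordinate-squared of a uniformly random unit vector in $\mathbb{C}^N$, and the Gaussian representation plus the Gamma/Beta identity correctly identifies this as $\mathrm{Beta}(1,N-1)$ with density $g(q)=(N-1)(1-q)^{N-2}$ on $[0,1]$. Your Taylor expansion of $\log\bigl(g(q)/f_{PT}(q)\bigr)$ is also correct.

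The genuine gap is in the last step, and it reflects an issue with the target you are trying to hit. The total variation distance between $\mathrm{Beta}(1,N-1)$ and $PT$ is $\Theta(1/N)$, not $o(1/N)$: at $q=0$ the densities differ by exactly $1$ (namely $g(0)=N-1$ versus $f_{PT}(0)=N$), and more sharply, $r(q)=g(q)/f_{PT}(q)$ is unimodal on $[0,1]$ with its maximum at $q=2/N$ equal to $1+1/N+O(1/N^2)$, so the positive part $(g-f_{PT})_+$, which is supported on an interval of width $\Theta(1/N)$ around $q=2/N$ where $f_{PT}=\Theta(N)$, integrates to $\Theta(1/N)$. Consequently there is no choice of threshold $R$ (nor any other argument) that yields $O(1/N^{c})$ for $c>1$; the ``delicate balancing'' you flag is not merely delicate but impossible for the literal ``for any $c>0$'' reading, and you should treat the exponent $c=1$ as the correct target (or read the Fact as ``for some $c>0$''). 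A cleaner route to the $O(1/N)$ bound that avoids your bulk/tail threshold entirely: since both $g$ and $f_{PT}$ are probability densities, $|P(x)-PT|=\int (g-f_{PT})_+\,dq$, and the uniform bound $r(q)\le 1+O(1/N)$ (from unimodality and the value at $q=2/N$) gives $(g-f_{PT})_+\le O(1/N)\,f_{PT}$ pointwise and hence $|P(x)-PT|\le O(1/N)$ immediately. Finally, a minor technical point: with your choice $R=(\log N)/\sqrt{N}$, the term $Nq^2$ reaches $(\log N)^2$ at the edge of the bulk, so the bound ``$r(q)=1+O(1/N+q+Nq^2)$'' is not a $1+o(1)$ statement there; taking $R=c/\sqrt{N}$ for a small constant $c$, or the one-line argument above, avoids this.
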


We will also be interested in the joint distribution generated by choosing a Haar random unitary $U$ and considering the output probabilities of $k$ fixed outcomes $x_1,...,x_k$.  We will denote this distribution over vectors of length $k$ as $P(x_1,...,x_k)$. Not surprisingly, this can be approximated by $k$ i.i.d copies of the Porter-Thomas distribution\footnote{For instance, Aaronson and Arkhipov showed that, for $k=O(N^{1/6})$, $P(x_1,...,x_k)$ is approximately $PT^{(k)}$, up to small total variation distance error \cite{aaronsonboson}.}, $PT^{(k)}$.
For convenience, we will define $P=P(x_1,x_2,...,x_N)$.  

Although $P$ is not close in total variation distance to $PT^{(N)}$ \footnote{This is because any $v$ drawn from $P$ will satisfy $|v|_1=1$, while $PT^{(N)}$ will satisfy this condition with probability 0.}, the distribution $P$ does maintain some of the coarse-grained features of $PT^{(N)}$.
This is because an equivalent way of sampling from $P$ is to a draw a vector from $PT^{(N)}$ and renormalize so that  $|v|_1=1$ \cite{ozolsrandom}.
By concentration of measure, this renormalization factor will be close to 1 with very high probability.
Therefore, following \cite{boixo2016characterizing}, in this section we will often perform calculations using the heuristic of replacing $P$ with $PT^{(N)}$.
We will describe why this suffices for the calculations in which it is used.

\subsection{The cross-entropy supremacy proposal}

Cross-entropy is a leading proposal for verifying quantum supremacy \cite{boixo2016characterizing,boixo2017fourier,neill2017blueprint}. For RCS it provides a measure of the distance between 
the output distribution of the experimental device and the ideal random circuit $U$. 
Estimating it requires just taking $k \ll N$ samples, $x_1, \ldots, x_k$, from the experimental device, followed by the computation of the empirical estimate $E$ of the cross-entropy
\begin{equation}
E = \frac{1}{k}\displaystyle\sum_{i=1\ldots k} \log\left(\frac{1}{p_U(x_i)}\right)
\label{eq:measuring-ce}
\end{equation}
by using a supercomputer to calculate ideal probabilities $p_U(x_i)$ for only the observed outcome strings $x_i$. By the law of large numbers, after sufficiently many samples\footnote{An argument is made that taking $\sim 10^5$ samples for an $n = 50$ qubit device suffices to obtain a good estimate of $\ced(p_{dev},p_U)$ \cite{martinistalk}. Furthermore, it is argued that drawing the unitary $U$ from an approximate 2-design instead of drawing $U$ Haar randomly is sufficient to argue that the ideal device obeys $\ced(p_U,p_U) \approx 1$. We note that since $\log$ is not a low-degree polynomial, this is not guaranteed by the fact that the distribution drawn from is an approximate 2-design. The argument is made on the basis of numerical evidence \cite{boixo2016characterizing}.} $E$ will converge to $\ce(p_{dev}, p_U)$, where $p_{dev}$ is the output probability of their experimental device tuned to perform $U$. %
Since $\ced(p_{dev},p_U) = \ce(\mathcal{U},p_U)-\ce(p_{dev},p_U)$ and $\ced(\mathcal{U}, p_U)$ is closely concentrated about its mean, $ \log N + \gamma$, where $\gamma$ is Euler's constant, from this one can infer an estimate of $\ced(p_{dev},p_U)$.
The goal of their experiment is to achieve a value of $\ced(p_{dev},p_U)$ as close to the ideal expectation value as possible (with high probability).
In fact, this measure has become incredibly important to the Google/UCSB group: it is being used to calibrate their candidate experimental device \cite{neill2017blueprint,martinistalk}.

If the experimental device were actually identical to the ideal $U$, then the expected value of cross-entropy difference for Haar random $U$ is easily estimated:
\[
\mathbb{E}_{U \sim \mathcal{H}}\left[\ced(p_U, p_U)\right] =1\pm o(1).
\]
This follows from linearity of expectation, since one only needs to compute this quantity for individual $x \in \bits^n$, which approximately obey the Porter-Thomas distribution, one can compute this with a simple integral. However, any near-term implementation of this experiment will be subject to experimental noise and, therefore, one should not expect achieve exactly $\ced = 1$.
In fact, the Google/UCSB group expects to obtain $\ced>0.1$ on their 40-50 qubit device \cite{boixo2016characterizing}. Clearly, achieving a value of $\ced$ close to $1$ is necessary for their device to be functioning properly. Here we ask if it is sufficient as well, i.e. whether or not achieving $\ced = 1\pm\epsilon$ certifies that the device has achieved quantum supremacy. %

\subsection{Cross-entropy does not verify total variation distance}
\label{sec:cross-entropy-neq-tv}

Our results from Appendix \ref{sec:worsttoavg} provide evidence that it is classically hard to sample from any outcome distribution close in total variation distance to the ideal distribution.  %
Consequently, our goal in this section is to examine if achieving a sufficiently high cross-entropy difference can be used to certify that the observed outcome distribution is close to ideal in total variation distance. 

That is, we ask if for general distributions $D$, does achieving $\ced(D,p_U) = 1 - \epsilon$ for Haar typical $U$ certify that $|D-p_U|<f(\epsilon)$ for some function $f$ of $\epsilon$? This is not a priori impossible; for instance, Pinsker's inequality states that the square root of the KL divergence between two distributions, which is closely related to cross entropy, is an upper bound on the total variation distance. So in some sense, we are asking if cross-entropy behaves similarly to KL divergence in this manner.

We answer this question in the negative. Therefore, achieving high cross-entropy difference does not allow us to conclude quantum supremacy based on the results in Appendix \ref{sec:worsttoavg}.
\begin{theorem}
For every unitary $U$, there exists a distribution $D_U$ so that, with probability $1-o(1)$ over the choice of $U$ from the Haar measure, $|D_U-p_U| \geq 0.99$, and yet $\ced(D_U,p_U)  \geq 1 - O(1/N^{\Theta(1)})$, i.e. the cross-entropy difference is exponentially close to its ideal value. 
\end{theorem}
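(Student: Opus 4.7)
The plan is to exhibit a particularly simple counterexample: let $D_U$ be a point mass on a single carefully chosen output string $x^\star = x^\star(U)$. Since a delta distribution puts all its weight on one outcome while $p_U$ spreads its mass over effectively $\Theta(N)$ strings, the total variation distance $|D_U - p_U| = 1 - p_U(x^\star)$ is automatically $1 - O(1/N)$ for any $x^\star$ with probability $O(1/N)$, which is safely above $0.99$ for large $n$. All the work therefore goes into choosing $x^\star$ so that the cross-entropy difference is close to the ideal value $1$.

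For a delta distribution on $x^\star$ one has $\ced(D_U, p_U) = \ce(\mathcal{U}, p_U) - \log(1/p_U(x^\star))$, so the calculation reduces to (a) understanding the empirical mean $\ce(\mathcal{U}, p_U) = \frac{1}{N}\sum_x \log(1/p_U(x))$ under Haar random $U$, and (b) finding an $x^\star$ with $p_U(x^\star)$ close to a specific target. Using the Porter-Thomas heuristic ($P \approx PT^{(N)}$) already invoked in the technical preliminaries, the expectation of each $\log(1/p_U(x))$ is $\int_0^\infty \log(1/p)\cdot N e^{-pN}\,dp = \log N + \gamma$. To force $\ced \approx 1$ I would therefore aim for $\log(1/p_U(x^\star)) = \log N + \gamma - 1 + O(1/N^{\Theta(1)})$, i.e.\ $p_U(x^\star)$ within $1/N^{1+\delta}$ of $e^{1-\gamma}/N$. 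Under Porter-Thomas the expected number of such strings is $\Theta(N^{1-\delta} e^{-e^{1-\gamma}})$, so with high probability over $U$ many such candidates exist by a standard first/second moment calculation, and we may take $x^\star$ to be any one of them.

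The main obstacle is making the concentration of $\ce(\mathcal{U}, p_U)$ around $\log N + \gamma$ rigorous to the required additive precision $O(1/N^{\Theta(1)})$. The probabilities $\{p_U(x)\}_x$ are not genuinely independent because they must sum to $1$, so a clean second-moment argument requires bounding the pairwise covariances $\mathrm{Cov}(\log(1/p_U(x)), \log(1/p_U(y)))$ for $x \neq y$. Each summand has variance $\pi^2/6$ under Porter-Thomas, and the covariances between distinct coordinates of $P$ are $O(1/N)$ (which one can verify via Weingarten calculus, or by appealing to the $P \approx PT^{(N)}$ comparison already used elsewhere in the paper); this yields $\Var \ce(\mathcal{U}, p_U) = O(1/N)$ and hence a Chebyshev bound of $O(1/\sqrt{N}) = O(1/N^{\Theta(1)})$ with probability $1-o(1)$. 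Modulo this quantitative comparison between $P$ and $PT^{(N)}$ for a statistic that depends on all $N$ coordinates simultaneously, the rest of the proof is routine, and a union bound over the two high-probability events (concentration of $\ce$, existence of a suitable $x^\star$) completes the argument.
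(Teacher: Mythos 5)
Your proposal is correct, and it is a genuinely different construction from the paper's. The paper's $D_U$ is a ``rescaled'' distribution that retains a $1/k$ fraction of the support: it sorts outcomes by $p_U$, groups consecutive outcomes into blocks of size $k$, and piles each block's mass onto one representative. Because the probabilities within a sorted block are nearly equal, $\ce(D_U,p_U) \approx \ce(p_U,p_U)$ and hence $\ced(D_U,p_U)\approx \ced(p_U,p_U)\approx 1$, while $|D_U-p_U|\approx 1-1/k$ (taking $k=500$ to clear $0.99$). Your $D_U$ is instead a point mass at a single $x^\star$ with $p_U(x^\star)\approx e^{1-\gamma}/N$, which collapses $\ce(D_U,p_U)$ to the single term $\log(1/p_U(x^\star))$ and drives the TV distance all the way to $1-p_U(x^\star)\approx 1$. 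Both routes lean on exactly the same technical core, which you correctly identify as the crux: showing that the empirical mean $\ce(\mathcal{U},p_U)=\frac1N\sum_x\log(1/p_U(x))$ concentrates around $\log N + \gamma$ to polynomial precision, which requires controlling the weak negative correlations induced by $\sum_x p_U(x)=1$ (and your $O(1/N)$ covariance estimate can indeed be verified via the representation $p_U(x)=q_x/\sum_y q_y$ with i.i.d.\ exponential $q_y$). What your version buys is simplicity: no sorting, no bucketing, no ``middle $99.9\%$'' bookkeeping, and the existence of a suitable $x^\star$ is an elementary first/second moment argument. What the paper's version buys is a more illustrative adversary — a device that merely clumps nearby output probabilities, rather than one that deterministically emits a fixed string — and a tunable TV distance $1-1/k$. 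Incidentally, your point-mass trick with a larger target $p_U(x^\star)$ also directly reproduces the paper's footnote observation that $\ced$ can be pushed strictly above $1$.
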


To understand the intuition behind the counterexample, it is helpful to consider the definition of KL divergence:
\[KL(D,D') = \ce(D,D')-H(D).\]
A small KL divergence gives an upper bound on the total variation distance $|D-D'|$ by Pinsker's inequality. 
If $H(D)$ is held constant, then relatively small changes in $\ce(D,D')$ also certify closeness in total variation distance.
But in this counterexample, we will decrease the entropy $H(D)$ by $k>1$ and, therefore, this allows us to increase the KL divergence while keeping a similar value of cross-entropy. 

\

\begin{proof}(\emph{Sketch})

The basic idea is to consider a ``rescaled'' distribution on $1/k$ of the outputs for some sufficiently large integer $k$.  
That is, we will assign probability 0 to $1-\frac{1}{k}$ fraction of the outputs, and multiply the probabilities on the remaining outputs by $k$. 
By construction, this has total variation distance roughly $1-\frac{1}{k}$ from the ideal distribution and relatively small entropy.
However, one can show it is essentially indistinguishable from the point of cross-entropy difference -- that is the cross-entropy difference is exponentially close to the ideal.

To be more precise, consider listing the strings $x \in \{0,1\}^n$ as $x_1, \ldots, x_N$ in order of increasing $p_U(x)$. 
Label the strings $x_i$, $i=1\ldots N$, such that $i<j$ implies $p_U(x_i)<p_U(x_j)$. 
For simplicity, we will focus only on the ``middle $99.9$ percent'' of the distribution, i.e. we will pick constants $c_1,c_2$ such that with high probability over the choice of $U$, $99.9$ percent of probability mass is on $x_i$ satisfying $\frac{c_1}{N}<p_U(x_i)\leq \frac{c_2}{N}$.  We will consider values of $i$ between $i_{min}$, the smallest $i$ such that $\frac{c_1}{N}<p_U(x_i)$, and $i_{max}$, the largest $i$ such that $p_U(x_i)<\frac{c_2}{N}$.

Now consider the distribution $D_U$ defined as follows: 
 \[D_U(x_i) = \begin{cases} 
 p_U(x_i) & i<i_{min} \quad \mathrm{ or }\quad i>i_{max}\\
 p_U(x_i)+p_U(x_{i+1})+\ldots+p_U(x_{i+k-1}) & i_{min}\leq i\leq i_{max} \quad \mathrm{ and } \quad i = k \mathbb{N}\\
 0                     & i_{min}\leq i\leq i_{max} \quad \mathrm{ and } \quad i \neq k\mathbb{N}.\\
\end{cases}\]

It's not hard to show see that the total variation distance between this distribution and the ideal distribution is $0.99(1-\frac{1}{k})$ in expectation over the choice of $U$, and hence if $k=500$ with high probability is more than $0.99$ by standard concentration inequalities. %
Furthermore, a careful but straightforward calculation shows that the $\ced$ of this rescaled distribution $D_U$ and $p_U$ is exponentially close to $1$, which is the ideal score.

In short, the cross-entropy difference does not behave like a metric\footnote{We note that one can create distributions which make the cross-entropy difference \emph{greater than 1} as well, by simply piling more probability mass on the "heavy" elements of the distribution and less on the ``light" elements of the distribution.}: achieving cross-entropy difference close to $1$ does not certify closeness in total variation distance.

 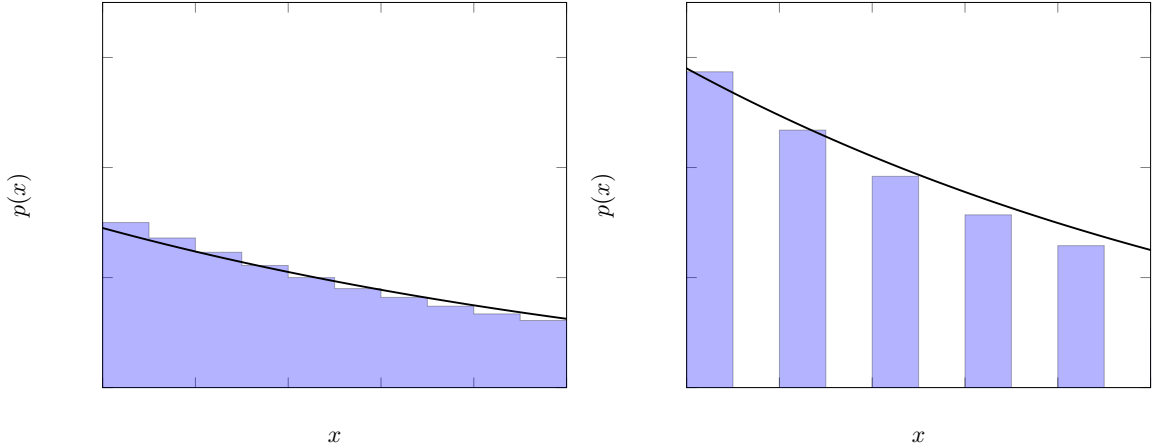
\begin{figure}[!ht]
 \centering
 \begin{tikzpicture}[scale = 0.9]
 \begin{axis}[ymin=0,ymax=0.35,xmin=0, xmax=1,enlargelimits=false,
     yticklabels={,,},xticklabels={,,},
     xlabel=$x$,ylabel=$p(x)$]
 \addplot
 	[const plot,fill=blue,draw=black,opacity=0.3] 
 coordinates
 {(0,0.15)    (0.1,0.136)  (0.2,0.123)   (0.3,0.111)
  (0.4,0.10) (0.5,0.09)  (0.6,0.082)  (0.7,0.074)
  (0.8,0.067) (0.9,0.061)  (1,0.055)} 
 	\closedcycle;
 \addplot[draw=black,thick,samples=400]{0.165 * pow(2,-x) - 0.02};
 \end{axis}
 \end{tikzpicture} \begin{tikzpicture}[scale = 0.9]
 \begin{axis}[ymin=0,ymax=0.35,,xmin=0, xmax=1,enlargelimits=false,
       yticklabels={,,},xticklabels={,,},
       xlabel=$x$,ylabel=$p(x)$]
 \addplot
 	[const plot,fill=blue,draw=black,opacity=0.3] 
 coordinates
 {(0,0.287)    (0.1,0)  (0.2,0.234)   (0.3,0.)
  (0.4,0.192) (0.5,0)  (0.6,0.157)  (0.7,0)
  (0.8,0.129) (0.9,0.)  (1,0.055)} 
 	\closedcycle;
 \addplot[draw=black,thick,samples=400]{0.33 * pow(2,-x) - 0.04};
 \end{axis}
 \end{tikzpicture}
 \caption{On the left, the initial output distribution. On the right, the ``rescaled'' distribution.}
 \end{figure}
\end{proof}

Although we have shown that cross-entropy does not generically certify total variation distance, we note that the Google/UCSB proposal makes the \emph{assumption} that their device either performs perfectly or else outputs the maximally mixed state on every run of their experiment \cite{boixo2016characterizing, boixo2017fourier}.  Equivalently, there exists an $\alpha \in [0,1]$ such that for each outcome $x \in \bits^n$,
\begin{equation}
    p_{dev}(x) = \alpha p_U(x) + (1-\alpha) \frac{1}{N}. \label{eq:assumption}
\end{equation}
Once this assumption is made, achieving cross-entropy close to the ideal implies closeness to the perfect output state in total variation distance: one can easily compute\footnote{This follows by the linearity of the measure, and the fact the uniform distribution is $1/e$-close to Porter-Thomas.} that achieving $\ced = 1-\epsilon$, together with the assumption in eq. ($\ref{eq:assumption}$) implies that $\mathbb{E}_{U\sim \mathbb{H}}|p_{dev}-p_U| \leq \frac{\epsilon}{e} \approx 0.37 \epsilon$. 
This assumption is reached via empirical evidence from their 9-qubit device \cite{neill2017blueprint} that errors cause their output distribution to look closer to uniform, as well as through numerical simulations of how an ideal device should behave under a particular noise model \cite{boixo2016characterizing}.  However, a 49 qubit device will likely be too large to verify this assumption.

\subsection{``Shape'' does not verify the ``quantumness'' of the output distribution}

Since the above example rules out a strong correlation between cross-entropy and total variation distance,
it is natural to wonder if some other property of outcome distributions arising from Random Circuit Sampling experiments could be put forward as a basis for a verifiable quantum supremacy proposal.  

An often mentioned candidate property is the density of probabilities in the outcome distribution.  The suggestion is that one can verify the ``quantumness'' of a system simply by analyzing the ``shape'' of the outcome distribution. A key property of typical distributions drawn from $P$ is that they will have a ``Porter-Thomas shape'' (recall, $P$ is the joint distribution over all $N$ output probabilities generated by choosing a Haar random $U$).   
That is, if one draws a vector $v \sim P$, then for any choice of constants $c_1<c_2$ the number of $x$ with $v_x$ in the range $[c_1/N, c_2/N]$ will be roughly $N\int_{c_1}^{c_2} e^{-q} dq$ in expectation over the choice of $v$ (i.e. the choice of unitary $U$).
Therefore, by concentration of measure, with high probability over the choice of $v$ from $P$, the distribution induced by choosing a random $x$ and sampling $v_x$ is close to (a discretized version of) Porter-Thomas.
Indeed, in the Google/UCSB proposal such a ``shape'' is referred to as a ``signature'' of quantum effects (see e.g., page 3 of \cite{boixo2016characterizing}).  

Note that since the Porter-Thomas distribution has an analytic description, there is a trivial classical algorithm for sampling from it. 
The more interesting question is whether any classical \emph{physical} processes can  reproduce the ``Porter-Thomas shape'', and how well these processes could score in cross-entropy.  We give an example of a simple classical physical process which produces probability distributions which are approximately Porter-Thomas in shape.
Furthermore, the classical process resembles the physics of a noisy/decoherent quantum system.  Consequently, the exponential nature of the Porter-Thomas distribution is not a signature of ``quantumness.''

In particular, consider a system of $n+m$ classical bits, the first $n$ of which we will call the ``system'', and the second $m$ of which we will call the ``environment''. 
Suppose that the system bits are initialized to $0$, while the environment bits are chosen uniformly at random.  Now suppose that one applies a uniformly random classical permutation to these $n+m$ bit strings (i.e. a random element $\sigma$ of $S_{2^{n+m}}$) and observes the first $n$ system bits many times (while ignoring the environment bits) with the same choice of $\sigma$ but different settings of the environment bits. 
A diagram of this process is provided below in quantum circuit notation, but note this is a purely classical process.
\[
    \Qcircuit @C=1em @R=0.7em {
\lstick{0^n}  & {/}\qw & \multigate{1}{\sigma} & {/}\qw& \meter \\
\lstick{\frac{I}{2^m}}  & {/} \qw &\ghost{\sigma} & {/}\qw & \qw
}
\]
A natural question is what ``shape'' of probability distribution does this process produce? 
Over the choice of $\sigma$, each input string on $n+m$ bits is mapped to a uniformly random output string on $n+m$ bits (of which we only observe the first $n$ bits). 
Therefore, this process resembles throwing $2^m$ balls (one for each possible setting of the environment bits) into $2^n$ bins (one for each possible output string of the system bits).
This is not exactly the process performed because each ball is not thrown independently due to the fact that $\sigma$ is a permutation rather than a random function.
However, if $m$ is sufficiently large -- say if $m=n$ -- then the value of each string is \emph{approximately} independent.
This is because we are only observing the first $n$ bits of the output string -- therefore, each bin we observe consists of $2^m$ possible output strings. 
So the fact that one string mapped to a particular observed output only very slightly decreases the probability another string does so.

Therefore, this classical system is well approximated by the process of throwing $2^m$ balls into $2^n$ bins. 
For simplicity, suppose we set $m=n$ (though we do not claim this choice is optimal). 
It is well known that in the large $n$ limit, the distribution of the number of balls in each bin is close to the Poisson distribution with mean $2^{m-n}=1$ \cite{motwani2010randomized}. 
We note that this process is still approximately Poisson if $\sigma$ is chosen $k$-wise independently (rather than truly uniformly random) for sufficiently large $k=\poly(n)$, since the number of bins with $k$ balls is a $k$th order moment of the distribution, and in the Poisson distribution with mean $1$, almost all bins will contain  $<\poly(n)$ balls with high probability.

Then this approximately produces a Poisson distribution with mean 1, i.e. the number of balls thrown into each bin is described as:
\[\mathrm{Pr}[c = k] = \frac{1}{k! e}\]
where $c$ is the count in a particular bin.
Now to better match the parameters in the Porter-Thomas distribution, we will consider normalization by the number of balls.  Letting $N=2^n$, we see that for any output string $x$,
\[\mathrm{Pr}_x\left[p_{\text{\tiny Poisson}}(x) = \frac{k}{N}\right] = \frac{1}{k! e}.\]

We claim that this distribution is a natural classical imposter of Porter-Thomas. 
Since $k! = 2^{\Theta(k\log k)}$, this distribution is also (approximately) exponential.
So this can be seen as a discretized version of Porter-Thomas, where the discretization resolution can be made finer by choosing larger $m$.
Just as the Porter-Thomas distribution approximately describes the distribution on output probabilities of a quantum system under a random choice of $U$, here the Poisson distribution approximately describes the distribution on output probabilities of this classical system under a random choice of $\sigma$. And as the Porter-Thomas distribution is reproduced with unitary $k$-designs for sufficiently large $k$, here the Poisson statistics are reproduced when $\sigma$ is chosen from a $k$-wise independent family for sufficiently large $k$. 

This shows that Random Circuit Sampling cannot be verified purely by the shape, or probability density, of the outcome distributions. This means that any supremacy proposal based on outcome statistics must directly incorporate the \emph{relationship} between outcome strings and their probabilities. This relationship is addressed by cross-entropy difference because in order to compute this, one must compute the ideal output probabilities of the experimentally observed samples $x$.

\subsection{The relationship between cross-entropy and Heavy Output Generation}

In this section, we will discuss a recent proposal of Aaronson and Chen \cite{aaronson2016complexity} and how it relates to cross-entropy.

In the Aaronson and Chen proposal, the task required of the quantum computer is relatively simple: given a circuit description of a unitary $U$, output random strings $x_1\dots x_k$ such that at least 2/3 of them or more are above-median weight in the distribution $p_U$. In other words, most of the samples output by the experimental device should be ``heavy''.
The proposal seems to directly address the relationship between outcome strings and their probabilities.
Here we restate this proposal in the language of cross-entropy to facilitate comparison and highlight their similarity:

\newcommand{\HoG}{\mathsf{HOG}}

\begin{definition}[\cite{aaronson2016complexity}]
A family of distributions $\{D_U\}$ satisfies \emph{Heavy Output Generation (HOG)} iff the following holds: Let
\[
\HoG(D_U,p_U) = \sum_{x \in \bits^n} D_U(x) \delta(p_U(x))
\]
where $\delta(z) = 1$ if $z \geq \frac{\ln 2}{N}$ and $0$ otherwise. 
Then the family is said to satisfy HOG if 
\[\mathbb{E}_{U \sim \Hh} \HoG(D_U,p_U) \geq 2/3.
\] 
\end{definition}
The quantity $\ln(2)/N$ is chosen because it is the median of Porter-Thomas. This is empirically measured as follows: pick a random $U$, obtain $k$ samples $x_1, \ldots, x_k$ from the experimental device and compute
\begin{equation}
H = \frac{1}{k} \sum_{i = 1, \ldots, k} \delta(p_U(x_i)).
\label{eq:measuring-hog}
\end{equation}
Analagous to the case of cross-entropy, only a small number of samples will be required to get a good estimate of $H$ by concentration of measure.
Note, that the ideal device will satisfy HOG since \[\mathbb{E}_U (\HoG(p_U,p_U)) = \frac{1 + \ln 2}{2} \approx 0.85.\] Therefore, there is some tolerance for error in the experiment.

Notice the similarities between cross-entropy and HOG (eqs. (\ref{eq:measuring-ce}) and (\ref{eq:measuring-hog})): Both are approximating the expectation value of some function of the ideal output probabilities $f(p_U(x_i))$ over the experimental output distribution. In the case of cross-entropy, $f(x) = \log(1/x)$. And in the case of HOG, $f(x) = \delta(x)$. 
Both measures are constructed such that for small system sizes, a supercomputer can be used to verify a solution to either measure by computing the output probabilities $p_U(x_i)$.

Just as achieving a high-cross entropy score does not imply closeness to the ideal distribution in total variation distance (Appendix \ref{sec:cross-entropy-neq-tv}), achieving a high score on HOG does not imply closeness in total variation distance either\footnote{An experimental device could always output the heaviest item and score well on HOG while being far in total variation distance from ideal.}. Both of these measures are projecting the output distribution of the experimental device (which lives in a very high dimensional space) onto a 1-dimensional space and using this as a proxy for supremacy. We observe that these are two distinct measures as they are capturing different one-dimensional projections.

\end{document}